\newtheorem{fact}{Fact}
\def\BH{{\sc Half-Square Of $\cal B$}}
\def\ECC{{\sc Edge Clique Cover}}
 \journalname{Algorithmica}
\begin{document}

\title{Hardness and structural results for half-squares of restricted tree convex bipartite graphs\thanks{This paper is an extended version of the COCOON 2017 paper~\cite{LeL17}.}
}

\titlerunning{Half-squares of restricted bipartite graphs}        % if too long for running head

\author{Hoang-Oanh Le \and
        Van Bang Le 
}

\institute{Hoang-Oanh Le \at
              Berlin, Germany \\
              \email{LeHoangOanh@web.de}
           \and
           Van Bang Le \at
            Universit\"at Rostock, Institut f\"ur Informatik, Rostock, Germany\\
            \email{van-bang.le@uni-rostock.de}
}

\date{Received: date / Accepted: date}
% The correct dates will be entered by the editor

\maketitle

\begin{abstract}
Let $B=(X,Y,E)$ be a bipartite graph. A half-square of $B$ has one color class of $B$ as vertex set, say $X$; two vertices are adjacent whenever they have a common neighbor in $Y$. Every planar graph is a half-square of a planar bipartite graph, namely of its subdivision. 
Until recently, only half-squares of planar bipartite graphs, also known as map graphs (Chen, Grigni and Papadimitriou [STOC 1998, J. ACM 2002]), have been investigated, and the most discussed problem is whether it is possible to recognize these graphs faster and simpler than Thorup's $O(n^{120})$-time algorithm (Thorup [FOCS 1998]).  

In this paper, we identify the first hardness case, namely that deciding if a graph is a half-square of a balanced bisplit graph is NP-complete. (Balanced bisplit graphs form a proper subclass of star convex bipartite graphs.) For classical subclasses of tree convex bipartite graphs such as biconvex, convex, and chordal bipartite graphs, we give good structural characterizations of their half-squares that imply efficient recognition algorithms. As a by-product, we obtain new characterizations of unit interval graphs, interval graphs, and of strongly chordal graphs in terms of half-squares of biconvex bipartite, convex bipartite, and of chordal bipartite graphs, respectively. Good characterizations of half-squares of star convex and star biconvex bipartite graphs are also given, giving linear-time recognition algorithms for these half-squares.
\keywords{Half-square \and NP-hardness \and graph algorithm \and computational complexity \and graph classes}
% \PACS{PACS code1 \and PACS code2 \and more}
 \subclass{68R10 \and 05C85 \and 68Q25} %Zbl
%\subjclass{F.2.2 [Analysis of Algorithms and Problem Complexity]:
%Nonnumerical Algorithms and Problems; G.2.2 [Discrete Mathematics]: Graph Theory}  
\end{abstract}

\section{Introduction}
The square of a graph $H$, denoted $H^2$, is obtained from $H$ by adding new edges between two distinct vertices whenever their distance is two. Then, $H$ is called a square root of $G=H^2$. Given a graph $G$, it is NP-complete to decide if $G$ is the square of some graph $H$ (\cite{MotwaniS}), even for a split graph $H$ (\cite{LauC}). 

Given a bipartite graph $B=(X,Y,E_B)$, the subgraphs of the square $B^2$ induced by the color classes $X$ and $Y$, $B^2[X]$ and $B^2[Y]$, are called the two {\em half-squares\/} 
%or {\em bipartite half\/} 
of $B$ (\cite{ChenGP98,ChenGP02}). 

While not every graph is the square of a graph and deciding if a graph is the square of a graph is hard, {\em every\/} graph $G=(V,E_G)$ is half-square of a bipartite graph: if $B=(V,E_G,E_B)$ is the bipartite graph with $E_B=\{ve \mid v\in V, e\in E_G, v\in e\}$, then clearly $G=B^2 [V]$. So one is interested in half-squares of special bipartite graphs. Note that $B$ is the subdivision of $G$, hence every planar graph is half-square of a planar bipartite graph.   

Let $\cal B$ be a class of bipartite graphs. A graph $G=(V,E_G)$ is called  {\em half-square} of $\cal B$ %or {\em $\cal B$ half} 
if there exists a bipartite graph $B=(V,W,E_B)$ in $\cal B$ such that $G=B^2[V]$. Then, $B$ is called a {\em ${\cal B}$ half-root} of $G$. With this notion, the following decision problem arises.
 
\medskip\noindent
\fbox{
\begin{minipage}{.96\textwidth} %.965
\BH\\[.5ex]
\begin{tabular}{l l}
{\em Instance:}& A graph $G=(V,E_G)$.\\
{\em Question:}& Is $G$ half-square of a bipartite graph in $\cal B$, i.e., does there \\ 
               & exist a bipartite graph $B=(V,W,E_B)$ in $\cal B$ s.t. $G=B^2[V]$?\\%[.7ex]
\end{tabular}
\end{minipage}
}

\medskip\noindent
In this paper, we discuss \BH\, for several restricted bipartite graph classes $\cal B$.

\smallskip\noindent
\textbf{Previous results and related work.}\, Half-squares of bipartite graphs have been introduced in~\cite{ChenGP98,ChenGP02} in order to give a graph-theoretical characterization of the so-called map graphs. A {\em map graph\/} is the (point-)intersection graph of simply-connected and interior-disjoint regions of the Euclidean plane. More precisely, a {\em map\/} of a graph $G=(V,E_G)$ is a function $\cal M$ taking each vertex $v\in V$ to a closed disc homeomorph ${\cal M}(v)$ (the regions) in the plane, such that all ${\cal M}(v)$, $v\in V$, are interior-disjoint, and two distinct vertices $v$ and $v'$ of $G$ are adjacent if and only if the boundaries of ${\cal M}(v)$ and ${\cal M}(v')$ intersect. A map graph is one having a map. It turns out that map graphs are exactly half-squares of planar bipartite graphs (\cite{ChenGP98,ChenGP02}). As we have seen at the beginning, every planar graph is a map graph but not the converse; map graphs may have arbitrary large cliques. As such, map graphs form an interesting and important graph class.  
%
%Etwas via MAP GRAPHS ... 
% 
The main problem concerning map graphs is to recognize if a given graph is a map graph. 
%, the \textsc{Half-Square Of Planar} problem. 
In \cite{Thorup}, Thorup shows that \textsc{Half-Square Of Planar}, that is, deciding if a graph is a half-square of a planar bipartite graph, can be solved in polynomial time \footnote{Thorup did not give the running time explicitly, but it is estimated to be roughly $O(n^{120})$ with $n$ being the vertex number of the input graph; cf.~\cite{ChenGP02}.}. Very recently, in \cite{MnichRS}, it is shown that \textsc{Half-Squares Of Outerplanar} and \textsc{Half-Square Of Tree} are solvable in linear time. 
Other papers deal with solving hard problems in map graphs include \cite{Chen,DemaineFHT,DemaineH,EickmeyerK17,FominLS}. Some applications of map graphs have been addressed in~\cite{ChenHK99}. The paper~\cite{Brandenburg} discussed a relation between map graphs and $1$-planar graphs, an interesting topic in graph drawing.

\smallskip\noindent
\textbf{Our results.}\, We identify the first class $\cal B$ of bipartite graphs for which \BH\, is NP-hard. Our class $\cal B$ is a subclass of the class of the bisplit bipartite graphs and of star convex bipartite graphs (all terms are given later). For some other subclasses of tree convex bipartite graphs, such as star convex and star biconvex, convex and biconvex, and chordal bipartite graphs, we give structural descriptions for their half-squares, that imply polynomial-time recognition algorithms:
\begin{itemize}
\item Recognizing half-squares of balanced bisplit graphs (a proper subclass of star convex bipartite graphs) is hard, even when restricted to co-bipartite graphs;
\item Half-squares of star convex and star biconvex can be recognized in linear time; 
\item Half-squares of biconvex bipartite graphs are precisely the unit interval graphs;
\item Half-squares of convex bipartite graphs are precisely the interval graphs;
\item Half-squares of chordal bipartite graphs are precisely the strongly chordal graphs.
\end{itemize}

%%%%%%%%%%%%%%%%%%%%%%% 
\section{Preliminaries}
%%%%%%%%%%%%%%%%%%%%%%%
Let $G=(V,E_G)$ be a graph with vertex set $V(G)=V$ and edge set $E(G)=E_G$. 
%We assume that a (input) graph has $n$ vertices and $m$ edges. 
A \emph{stable set} (a \emph{clique}) in $G$ is a set of pairwise
non-adjacent (adjacent) vertices. The complete graph on $n$ vertices, the complete bipartite graph with $s$ vertices in one color class and $t$ vertices in the other color class, the cycle with $n$ vertices are denoted $K_n, K_{s,t}$, and $C_n$, respectively. A $K_3$ is also called a {\em triangle\/}, a complete bipartite graph is also called a {\em biclique}, a complete bipartite graph $K_{1,n}$ is also called a \emph{star}.

The neighborhood of a vertex $v$ in $G$, denoted by $N_G(v)$, is the set of all vertices in $G$ adjacent to $v$; if the context is clear, we simply write $N(v)$. 
%Set $\deg(v) = |N(v)|$, the degree of the vertex $v$. 
A {\em universal vertex} $v$ in $G$ is one with $N(v)=V\setminus\{v\}$, i.e., $v$ is adjacent to all other vertices in $G$.

For a subset $W\subseteq V$,
$G[W]$ is the subgraph of $G$ induced by $W$, and $G-W$ stands for $G[V\setminus W]$. We write $B=(X,Y,E_B)$ for bipartite graphs with a bipartition into stable sets $X$ and $Y$. For subsets $S\subseteq X$, $T\subseteq Y$ we denote $B[S,T]$ for the bipartite subgraph of $B$ induced by $S\cup T$. 

We will consider half-squares of the following well-known subclasses of bipartite graphs: Let $B=(X,Y,E_B)$ be a bipartite graph.

\begin{itemize}
\item $B$ is \emph{$X$-convex} if there is a linear ordering on $X$ such that, for each $y\in Y$, $N(y)$ is an interval in $X$. Being \emph{$Y$-convex} is defined similarly. $B$ is \emph{convex} if it is $X$-convex or $Y$-convex. $B$ is \emph{biconvex} if it is both $X$-convex and $Y$-convex. We write \textsc{Convex} and \textsc{Biconvex} to denote the class of convex bipartite graphs, respectively, the class of biconvex bipartite graphs.
\item $B$ is \emph{chordal bipartite} if $B$ has no induced cycle of length at least six. 
\textsc{Chordal Bipartite} stands for the class of chordal bipartite graphs.
\item $B$ is \emph{tree $X$-convex} if there exists a tree $T=(X,E_T)$ such that, for each $y\in Y$, $N(y)$ induces a subtree in $T$. Being \emph{tree $Y$-convex} is defined similarly. $B$ is \emph{tree convex} if it is tree $X$-convex or tree $Y$-convex. $B$ is \emph{tree biconvex} if it is both tree $X$-convex and tree $Y$-convex. When $T$ is a star, we also speak of \emph{star convex} and \emph{star biconvex} bipartite graphs.

\textsc{Tree Convex} and \textsc{Tree Biconvex} are the class of all tree convex bipartite graphs and all tree biconvex bipartite graphs, respectively, and \textsc{Star Convex} and \textsc{Star Biconvex} are the class of all star convex bipartite graphs and all star biconvex bipartite graphs, respectively.
\end{itemize}

It is known that \textsc{Biconvex} $\subset$ \textsc{Convex} $\subset$ \textsc{Chordal Bipartite} $\subset$ \textsc{Tree Biconvex} $\subset$ \textsc{Tree Convex}. All inclusions are proper; 
see~\cite{Spinrad,Liu} %\cite{BLS,Spinrad,Liu} 
for more information on these graph classes. 

Given a graph $G$, we often use the following two kinds of bipartite graphs associated to $G$:

\begin{definition}%[vertex-clique incidence bipartite graph]
\label{def:incidence}
Let $G=(V,E_G)$ be an arbitrary graph.
\begin{itemize}
\item The bipartite graph $B=(V,E_G, E_B)$ with $E_B=\{ve\mid v\in V, e\in E_G, v\in e\}$ is the \emph{subdivision} of $G$.
\item Let ${\cal C}(G)$ denote the set of all maximal cliques of $G$. The bipartite graph $B=(V,{\cal C}(G),E_B)$ with $E_B=\{vQ\mid v\in V, Q\in {\cal C}(G), v\in Q\}$ is the \emph{vertex-clique incidence bipartite graph} of $G$.
\end{itemize}
\end{definition}

Note that the subdivision of a planar graph is planar, and subdivisions and vertex-clique incidence graphs of triangle-free graphs coincide.
 
\begin{proposition}\label{prop:subdivision-incidence}
Every graph is half-square of its vertex-clique incidence bipartite graph. More precisely, if $B=(V,{\cal C}(G), E_B)$ is the vertex-clique incidence bipartite graph of $G=(V,E_G)$, then $G=B^2[V]$. Similar statement holds for subdivisions.
\end{proposition}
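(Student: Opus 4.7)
The plan is to verify the identity $G=B^2[V]$ by a direct double inclusion on edge sets, which is really just definition chasing. Since $V(B^2[V])=V=V(G)$ by construction, I only need to check that two distinct vertices $u,v\in V$ are adjacent in $G$ if and only if they have a common neighbor in $B$.

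First I would handle the forward inclusion $E(G)\subseteq E(B^2[V])$. If $uv\in E_G$, then $\{u,v\}$ is a clique in $G$, so by the standard extension argument it is contained in at least one maximal clique $Q\in{\cal C}(G)$. By the definition of the vertex-clique incidence bipartite graph, both $uQ$ and $vQ$ lie in $E_B$, so $Q$ is a common neighbor of $u$ and $v$ in $B$, giving $uv\in E(B^2[V])$.

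For the reverse inclusion $E(B^2[V])\subseteq E(G)$, assume $u\neq v$ are adjacent in $B^2[V]$; then they share a neighbor in $B$, which by construction must be some maximal clique $Q\in{\cal C}(G)$ containing both $u$ and $v$. Since $Q$ is a clique of $G$, the edge $uv$ belongs to $E_G$.

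The subdivision case is essentially the same, with edges of $G$ playing the role of maximal cliques: an edge $e=uv\in E_G$ yields a common neighbor $e$ of $u$ and $v$ in the subdivision, and conversely any common neighbor of $u\neq v$ in the subdivision is an edge $e\in E_G$ incident to both, hence $e=\{u,v\}$, so $uv\in E_G$. There is no real obstacle here; the only thing to be careful about is restricting to $u\neq v$ so that loops are not introduced by a vertex being its own ``common neighbor's endpoint.''
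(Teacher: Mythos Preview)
Your proof is correct and follows essentially the same approach as the paper: both argue that for distinct $u,v\in V$, the conditions $uv\in E_G$, $u,v\in Q$ for some $Q\in{\cal C}(G)$, and $uv\in E(B^2[V])$ are pairwise equivalent. The paper simply states this chain of equivalences in one line and omits the subdivision case, whereas you spell out both inclusions and the subdivision argument explicitly.
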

\begin{proof}
 For distinct vertices $u, v\in V$, $uv\in E_G$ if and only if $u, v\in Q$ for some $Q\in {\cal C}(G)$, if and only if $u$ and $v$ are adjacent in $B^2[V]$. That is, $G=B^2[V]$. 
 \qed
\end{proof}

%%%%%%%%%%%%%%%%%%%%%%%%%%%%%%%%%%%%%%%%%%%%%%%%%%%%%%%%%%%%%%%%%%%%%%%%%%%%%%%%%%%%%%%%%%%

\section{Recognizing half-squares of balanced bisplit graphs is hard}\label{sec:hardness}

%%%%%%%%%%%%%%%%%%%%%%%%%%%%%%%%%%%%%%%%%%%%%%%%%%%%%%%%%%%%%%%%%%%%%%%%%%%%%%%%%%%%%%%%%%%%

A graph $G = (V,E)$ is a {\em split graph\/} if there is a partition $V = Q \,\dot\cup\, S$ of its vertex set into a clique $Q$ and stable set $S$. Recall that  a biclique is a complete bipartite graph. Following the concept of split graphs, we call a bipartite graph {\em bisplit\/} if it can be partitioned into a biclique and a stable set. In this section, we show that \textsc{Half-Square Of Balanced Bisplit} is NP-hard. Balanced bisplit graphs form a proper subclass of bisplit graphs, and are defined as follows.

\begin{definition}\label{def:splitbip2}
A bipartite graph $B=(X,Y,E_B)$ is called {\em balanced bisplit\/} if it satisfies the following properties:
\begin{itemize}
\item[(i)] $|X| = |Y|$;
\item[(ii)] there is partition $X=X_1\,\dot\cup\, X_2$ such that $B[X_1,Y]$ is a biclique;
\item[(iii)] there is partition $Y=Y_1\,\dot\cup\, Y_2$ such that the edge set of $B[X_2,Y_2]$ is a perfect matching.
\end{itemize} 
Note that by {(i)} and {(iii)}, $|X_1|=|Y_1|$, and by {(ii)} and {(iii)}, every vertex in $X_1$ is universal in $B^2[X]$.
\end{definition}

In order to prove the NP-hardness of \textsc{Half-Square Of Balanced Bisplit}, we will reduce the following well-known NP-complete problem \textsc{Edge Clique Cover} to it.

\medskip\noindent
\fbox{
\begin{minipage}{.96\textwidth} %.965
\ECC\\[.5ex]
\begin{tabular}{l l}
{\em Instance:}& A graph $G=(V,E_G)$ and a positive integer $k$.\\
{\em Question:}& Do there exist $k$ cliques in $G$ such that each edge of $G$ is\\
               & contained in some of these cliques?\\%[.7ex]
\end{tabular}
\end{minipage}
}

\medskip\noindent
\ECC\ is NP-complete \cite{Holyer,KouSW,Orlin}, even when restricted to %planar graphs \cite{ChangM} or
 co-bipartite graphs \cite{LeP}. (A co-bipartite graph is the complement of a bipartite graph.)

\begin{theorem}\label{thm:balancedsplitbip}
\textsc{Half-Square Of Balanced Bisplit} is NP-complete, even when restricted to co-bipartite graphs.
\end{theorem}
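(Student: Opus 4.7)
Membership in $\NP$ is immediate, since a balanced bisplit half-root has $O(|V(G)|)$ vertices and can be verified in polynomial time. For hardness I plan to reduce from \ECC\ on co-bipartite graphs, which is $\NP$-complete by~\cite{LeP}. A preprocessing step will let me assume the input graph $G = (V, E_G)$ has no universal vertex (universal vertices of a co-bipartite instance can be stripped off with a corresponding adjustment to $k$). Given such an instance $(G, k)$, I form $G'$ by adding $k$ new universal vertices $U = \{u_1, \ldots, u_k\}$ to $G$; the graph $G'$ is again co-bipartite, because $\overline{G'}$ is $\overline{G}$ together with $k$ isolated vertices.

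The engine of the reduction is the following structural observation on any half-square $H = B^2[X]$ of a balanced bisplit graph $B = (X, Y, E_B)$ with partitions $X = X_1\,\dot\cup\,X_2$ and $Y = Y_1\,\dot\cup\,Y_2$ as in Definition~\ref{def:splitbip2}: every $x \in X_1$ is universal in $H$ (already noted), and since $B[X_2, Y_2]$ is a perfect matching, each $y \in Y_2$ has exactly one neighbor in $X_2$, so no two distinct vertices of $X_2$ can share a neighbor in $Y_2$. Combining these, for distinct $u, v \in X_2$ we have $uv \in E(H)$ iff $u$ and $v$ have a common neighbor in $Y_1$. Hence the sets $C_y := N_B(y) \cap X_2$, for $y \in Y_1$, form an edge clique cover of $H[X_2]$ of size $|Y_1| = |X_1|$.

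For the forward direction, given an edge clique cover $Q_1, \ldots, Q_\ell$ of $G$ with $\ell \le k$, I will construct $B$ explicitly: take $X_1 := U$, $X_2 := V$, $Y_1 := \{y_1, \ldots, y_k\}$, $Y_2 := \{z_v : v \in V\}$; make $X_1 \cup Y$ into a biclique; set $N_B(y_i) \cap X_2 := Q_i$ for $i \le \ell$ and $N_B(y_i) \cap X_2 := \emptyset$ for $\ell < i \le k$; and let each $z_v$ be adjacent in $X_2$ only to $v$. The three properties of Definition~\ref{def:splitbip2} are then immediate, and the structural observation yields $B^2[X] = G'$ by a direct check.

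For the reverse direction, suppose $B$ is a balanced bisplit half-root of $G'$. The observation forces $X_1$ to consist of universal vertices of $G'$; since $G$ has no universal vertex, we obtain $X_1 \subseteq U$, whence $|X_1| \le k$ and $V \subseteq X_2$. Restricting the clique cover $\{C_y : y \in Y_1\}$ of $G'[X_2]$ to $V$ then gives an edge clique cover of $G = G'[V]$ of size $|Y_1| = |X_1| \le k$. The main obstacle I expect is the preprocessing step that removes universal vertices of $G$, since that assumption is exactly what pins down $X_1 \subseteq U$ and matches the count to $k$; the rest of the argument is a routine combinatorial verification.
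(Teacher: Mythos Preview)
Your proposal is correct and follows essentially the same reduction as the paper: start from \ECC\ on a co-bipartite graph without universal vertices, add $k$ fresh universal vertices to obtain $G'$, and identify the clique-cover data with the $Y_1$-neighborhoods in a balanced bisplit half-root. Your explicit ``structural observation'' and the weaker conclusion $X_1\subseteq U$ (yielding $\le k$ cliques) in the reverse direction are minor but clean refinements of exactly the argument the paper gives.
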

\begin{proof} 
It is clear that \textsc{Half-Square Of Balanced Bisplit} is in NP, since guessing a bipartite-half root $B=(V,W,E_B)$ with $|W|=|V|$, verifying that $B$ is balanced bisplit, and that 
$G = B^2[V]$ can obviously be done in polynomial time. Thus, by reducing \ECC\ to \textsc{Half-Square Of Balanced Bisplit}, we will conclude that \textsc{Half-Square Of Balanced Bisplit} is NP-complete.

Let $(G=(V,E_G),k)$ be an instance of \ECC. Note that we may assume that $k\le |E_G|$, and that $G$ is connected and has no universal vertices. 
We construct an instance $G'=(V',E_{G'})$ of \textsc{Half-Square Of Balanced Bisplit} as follows: $G'$ is obtained from $G$ by adding a set $U$ of $k$ new vertices, $U=\{u_1,\ldots, u_k\}$, and all edges between vertices in $U$ and all edges $uv$ with $u\in U$, $v\in V$. Thus, $V'=V\cup U$, $G'[V]=G$ and the $k$ new vertices in $U$ are exactly the universal vertices of $G'$. Clearly, $G'$ can be constructed in polynomial time $O(k|V|)=O(|E_G|\cdot|V|)$, and in addition, if $G$ is co-bipartite, then $G'$ is co-bipartite, too. 
We now show that $(G,k)\in$ \ECC\ if and only if $G'\in$ \textsc{Half-Square Of Balanced Bisplit}. 

First, suppose that the edges of $G=(V,E_G)$ can be covered by $k$ cliques $Q_1,\ldots, Q_k$ in $G$. We are going to show that $G'$ is half-square of some balanced bisplit bipartite graph. 
Consider the bipartite graph $B=(V',W,E_B)$ (see also Figure~\ref{fig:B}) with 
\[W=W_1\cup W_2,\, \text{ where }\, W_1=\{w_1, \ldots, w_k\},\,\text{ and }\, W_2=\{w_v\mid v\in V\}. \]
In particular, $|V'|=|W|=k+|V|$. The edge set $E_B$ is as follows:
\begin{itemize}
\item $B$ has all edges between $U$ and $W$, i.e., $B[U,W]$ is a biclique, 
\item $B$ has edges $vw_v$, $v\in V$. Thus, the edge set of $B[V,W_2]$ forms a perfect matching, and 
\item $B$ has edges $vw_i$, $v\in V$, $1\le i\le k$, whenever $v\in V$ is contained in clique $Q_i$, $1\le i\le k$. 
\end{itemize}

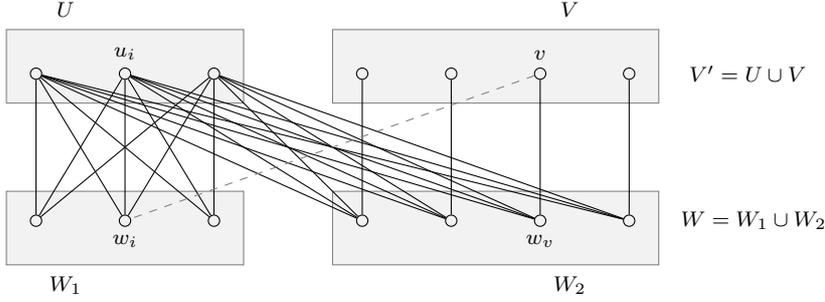
\begin{figure}[H] %[ht]
\begin{center}
%\begin{tikzpicture}[scale=.6]
\begin{tikzpicture}[scale=.39]
\tikzstyle{vertex}=[draw,circle,inner sep=1.5pt]; 

\filldraw[fill=black!5!white, draw=gray] (-1,4) rectangle (7,6.5); 
\node at (1,7.2) {$U$};
\node[vertex] (u1) at (0,5) {}; 
\node[vertex] (u2) at (3,5) [label=above:$u_i$] {}; 
\node[vertex] (u3) at (6,5) {}; 

\filldraw[fill=black!5!white, draw=gray] (10,4) rectangle (21,6.5); 
\node at (18,7.2) {$V$};
\node[vertex] (v1) at (11,5) {}; 
\node[vertex] (v2) at (14,5) {}; 
\node[vertex] (v3) at (17,5) [label=above:$v$] {};
\node[vertex] (v4) at (20,5) {};
%\node[vertex] (v5) at (23,4) {};
\node at (24,5) {$V'=U\cup V$};

\filldraw[fill=black!5!white, draw=gray] (-1,-1.5) rectangle (7,1); 
\node at (1,-2.2) {$W_1$};
\node[vertex] (w1) at (0,0) {}; 
\node[vertex] (w2) at (3,0) [label=below:$w_i$] {}; 
\node[vertex] (w3) at (6,0) {}; 

\filldraw[fill=black!5!white, draw=gray] (10,-1.5) rectangle (21,1); 
\node at (18,-2.2) {$W_2$};
\node[vertex] (wv1) at (11,0) {}; 
\node[vertex] (wv2) at (14,0) {}; 
\node[vertex] (wv3) at (17,0) [label=below:$w_v$] {};
\node[vertex] (wv4) at (20,0) {};
%\node[vertex] (wv5) at (23,0) {};
\node at (24.2,0) {$W=W_1\cup W_2$};

\draw (v1) -- (wv1); \draw (v2) -- (wv2); \draw (v3) -- (wv3); \draw (v4) -- (wv4); 
%\draw (v5) -- (wv5);
\foreach \w in {wv1,wv2,wv3,wv4,w1,w2,w3}
{\foreach \u in {u1,u2,u3}
{
%\draw[gray, thin] (\w) -- (\u);
\draw (\w) -- (\u);
}
}
\draw[dashed, thin, gray] (v3) -- (w2);
\end{tikzpicture}
\end{center}
\caption{The balanced bisplit graph $B=(V',W,E_B)$ proving $G'=B^2[V']$; $v\in V$ is adjacent to $w_i\in W_1$ if and only if $v\in Q_i$.}\label{fig:B}
\end{figure}

Thus, $B$ is a balanced bisplit graph. Moreover, by the construction of $B$, we have in $B^2[V']$:
\begin{itemize}
\item $U=\{u_1,\ldots,u_k\}$ is a clique (as $B[U,W]$ is a biclique),  
\item every vertex $u\in U$ is adjacent to all vertices $v\in V$ (recall that $G$ is connected, so every $v\in V$ is in some $Q_i$, and $w_i\in W_1$ is a common neighbor of $u$ and $v$), and 
\item no two distinct vertices $v, z\in V$ have, in $B$, a common neighbor in $W_2$. So $u$ and $z$ are adjacent in $B^2[V']$ if and only if $v$ and $z$ have a common neighbor $w_i$ in $W_1$, if and only if $v$ and $z$ belong to a clique $Q_i$ in $G$, if and only if $u$ and $z$ are adjacent in $G$. 
\end{itemize}
That is, $G'=B^2[V']$.

\smallskip
Conversely, suppose $G'=H^2[V']$ for some balanced bisplit graph $H=(V',Y,E_H)$ with $|V'|=|Y|$ and partitions $V'=X_1\,\dot\cup\,X_2$ and $Y=Y_1\,\dot\cup\, Y_2$ as in Definition~\ref{def:splitbip2}. We are going to show that the edges of $G$ can be covered by $k$ cliques. As $H[X_1,Y]$ is a biclique, all vertices in $X_1$ are universal in $H^2[V']=G'$. Hence 
\[X_1=U\] 
because no vertex in $V=V'\setminus U$ is universal in $G'$ (recall that $G$ has no universal vertex). Therefore (recall that $G'[V]=G$) 
\[\text{$X_2=V$ and $G=H^2[V]$}.\] 
Note that, as $H$ is a balanced bisplit graph, $|Y_1|=|U|=k$. Write $Y_1=\{q_1,\ldots, q_k\}$ and observe that no two vertices in $V$ have a common neighbor in $Y_2$. Thus, for each edge $vz$ in $G=H^2[V]$, $v$ and $z$ have a common neighbor $q_i$ in $Y_1$. Therefore, the $k$ cliques $Q_i$ in $H^2[V]$, $1\le i\le k$, induced by the neighbors of $q_i$ in $V$, cover the edges of $G$.  
\qed
\end{proof}

%We remark that squares of bipartite graphs can be recognized in polynomial time (\cite{Lau}). 
Theorem~\ref{thm:balancedsplitbip} indicates that recognizing half-squares of restricted bipartite graphs is algorithmically much more complex than recognizing squares of bipartite graphs; the latter can be done in polynomial time (\cite{Lau}).

Observe that balanced bisplit graphs are star convex: Let $B=(X,Y,E_B)$ be a bipartite graph with the properties in Definition~\ref{def:splitbip2}. Fix a vertex $u\in X_1$ and consider the star $T=(X,\{uv\mid v\in X-u\})$. Since every vertex $y\in Y$ is adjacent to $u$, $N(y)$ induces a substar in $T$. Note, however, that the hardness of \textsc{Half-Square Of Balanced Bisplit} does not imply the hardness of \textsc{Half-Square Of Star Convex}. This is because the proof of Theorem~\ref{thm:balancedsplitbip} strongly relies on the properties of balanced bisplit graphs. 

Indeed, we will show in the next section that half-squares of star-convex bipartite graphs can be recognized in polynomial time. 

%%%%%%%%%%%%%%%%%%%%%%%%%%%%%%%%%%%%%%%%%%%%%%%%%%%%%%%%%%%%%%%%%%%%%%%%%%%%%%%%

\section{Half-squares of star convex and star biconvex bipartite graphs}\label{sec:starconvex}

%%%%%%%%%%%%%%%%%%%%%%%%%%%%%%%%%%%%%%%%%%%%%%%%%%%%%%%%%%%%%%%%%%%%%%%%%%%%%%%%

We need more notations for stating our results. Let $G=(V_G,E_G)$ and $H=(V_H,E_H)$ be two (vertex-disjoint) graphs. For a vertex $v\in V_G$, we say that the graph with 
\begin{itemize}
\item vertex set $(V_G\setminus\{v\})\cup V_H$ and 
\item edge set $(E_G\setminus\{e\in E_G\mid v\in e\})\cup E_H\cup\{ux\mid u\in N_G(v), x\in V_H\}$
\end{itemize}
is obtained from $G$ by substituting the vertex $v$ of $G$ by the graph $H$. Thus, {\em substituting\/} a vertex $v$ of $G$ by the graph $H$ results in the graph obtained from $G-v$ and $H$ by adding all edges between $N_G(v)$ and $V_H$. 

Recall that a split graph is one, whose vertex set can be partitioned into a clique and a stable set. In a graph, a connected component is {\em big\/} if it has at least two vertices.
\begin{lemma}\label{lem:star-convex-1}
Let $B=(X,Y,E_B)$ be a star convex bipartite graph with an associated star $T=(X,E_T)$. Then
\begin{itemize}
  \item[\em (i)] $B^2[X]$ has at most one big connected component and the big connected component has a universal vertex.
  \item[\em (ii)] $B^2[Y]$ is obtained from a split graph by substituting vertices by cliques.
\end{itemize}
\end{lemma}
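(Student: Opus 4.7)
The plan is to exploit the fact that any subtree of a star is either empty, a single vertex, or consists of the center together with some leaves. Writing $c$ for the center of $T$, this means that for every $y\in Y$ the set $N_B(y)$ is one of the following: empty, a single vertex of $X$ (the center or a leaf), or the center $c$ together with a non-empty set of leaves. I would use this dichotomy to analyze the two half-squares separately.

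For part (i), I would call a leaf $v$ of $T$ \emph{connected} if some $y\in N_B(v)$ satisfies $c\in N_B(y)$, and \emph{isolated} otherwise. If $v$ is isolated, then by the structural dichotomy every $y\in N_B(v)$ has $N_B(y)=\{v\}$ (otherwise $N_B(y)$ would be a subtree of $T$ with more than one vertex containing $v$, and hence would contain $c$), so $v$ has no neighbor in $B^2[X]$. If $v$ is connected, then by definition $v$ and $c$ share a neighbor in $Y$, so $cv$ is an edge of $B^2[X]$. Consequently, every non-isolated vertex of $B^2[X]$ is either $c$ or adjacent to $c$; therefore $B^2[X]$ has at most one big connected component, and, when it exists, $c$ is a universal vertex in it.

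For part (ii), I would partition $Y = Y_0 \,\dot\cup\, Y_1 \,\dot\cup\, Y_2$, where $Y_0=\{y\in Y : N_B(y)=\emptyset\}$, $Y_1=\{y\in Y : c\in N_B(y)\}$, and $Y_2=Y\setminus(Y_0\cup Y_1)$, so that each $y\in Y_2$ satisfies $N_B(y)=\{v_y\}$ for a uniquely determined leaf $v_y\neq c$. For each leaf $v$, set $C_v=\{y\in Y_2 : v_y=v\}$. Direct checks in $B^2[Y]$ give: $Y_1$ is a clique (common neighbor $c$); each $C_v$ is a clique (common neighbor $v$); there are no edges between distinct $C_v,C_{v'}$ or involving any vertex of $Y_0$; and for $y\in Y_1$ and $y'\in C_v$, the edge $yy'$ is present iff $v\in N_B(y)$. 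I would then define the split graph $S$ on vertex set $Y_1 \cup Y_0 \cup \{v : C_v\neq\emptyset\}$, with clique $Y_1$, stable set $Y_0\cup\{v : C_v\neq\emptyset\}$, and an edge $yv$ (for $y\in Y_1$) whenever $v\in N_B(y)$. Finally, $B^2[Y]$ is recovered from $S$ by substituting each leaf-vertex $v$ by the clique $C_v$ while keeping the remaining vertices as singleton cliques.

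The main obstacle is not a deep one but a bookkeeping issue: ensuring that the corner cases — in particular vertices $y\in Y_1$ with $N_B(y)=\{c\}$, leaves that appear in no singleton neighborhood, and isolated vertices of $B$ in either color class — are correctly absorbed by the split-graph description and the substitution, so that the resulting graph is \emph{exactly} $B^2[Y]$ rather than merely an induced subgraph or supergraph.
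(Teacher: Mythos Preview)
Your proposal is correct and follows essentially the same approach as the paper: both arguments hinge on the observation that every $y\in Y$ with $|N_B(y)|\ge 2$ must contain the star's center, which immediately gives (i), and both prove (ii) by splitting $Y$ into the clique $\{y: c\in N_B(y)\}$ and groups of remaining degree-one vertices indexed by their unique leaf neighbor. The only cosmetic differences are that the paper picks a representative $y_i$ from each leaf-group to form the stable set (rather than using abstract leaf-placeholders as you do), and that you are slightly more explicit about isolated vertices of $B$ in $Y$.
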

\begin{proof} 
Let $x_0$ be the center vertex of the star $T=(X,E_T)$. 
%For each $y\in Y$, $N(y)$ is non-empty because $N(y)$ forms a subtree of $T$. 
Note that, if $N(y)$ has at least two vertices, then $N(y)$ must contain $x_0$.

\smallskip\noindent
(i): If $|N(y)|\le 1$ for all $y\in Y$, then $B^2[X]$ is clearly edgeless, and (i) trivially holds. So, assume that 
\[Y_0:=\{y\in Y\,:\, |N(y)|\ge 2\}\]
is not empty. Write
\[X_0:=N(Y_0)\,\text{ and }\, X_1:=X\setminus X_0.\]
Then $|X_0|\ge 2$ and $x_0\in X_0$. Moreover, for every $u\in X_1$ and $v\in X$, $u$ and $v$ have no common neighbor in $Y$. Thus, $B^2[X]$ consists of the big component induced by $X_0$ in which $x_0$ is a universal vertex and $|X_1|$ many one-vertex components.

\smallskip\noindent
(ii): Partition $Y$ into $Y_0$ and $Y_1$ with
\[Y_0:=\{y\in Y\mid x_0\in N(y)\}\,\text{ and }\, Y_1:=Y\setminus Y_0.\]
Then, clearly, 
\[\text{$Y_0$ is a clique in $B^2[Y]$}.\]
Let $Y_1\not=\emptyset$ (otherwise, (ii) obviously holds), and write $N(Y_1)=\{x_1,\ldots,x_p\}$ for some $p\ge 1$. Note that every vertex $y\in Y_1$ has degree one since $x_0\not\in N(y)$. Thus, $B[\{x_1,\ldots,x_p\},Y_1]$ consists of $p$ vertex-disjoint stars $(x_i,N(x_i)\cap Y_1)$ at center vertices $x_i$, $1\le i\le p$. For each $i$, fix a vertex $y_i\in N(x_i)\cap Y_1$. (See Figure~\ref{fig:lemma1ii}.)
\begin{figure}[H] %[ht]
\begin{center}
%\begin{tikzpicture}[scale=.6]
\begin{tikzpicture}[scale=.39]
\tikzstyle{vertex}=[draw,circle,inner sep=1.5pt]; 

\filldraw[fill=black!5!white, draw=gray] (-1,4) rectangle (21,6.5); 
\node[vertex] (u1) at (0,5) [label=above:$x_0$] {}; 
\node[vertex] (u2) at (3,5) {}; 
\node[vertex] (u3) at (6,5) {}; 
\node[vertex] (u4) at (8,5) {}; 

%\filldraw[fill=black!5!white, draw=gray] (10,4) rectangle (21,6.5); 
\node[vertex] (v1) at (12,5) [label=above:$x_1$] {}; 
\node[vertex] (v2) at (15.5,5) [label=above:$x_2$] {}; 
\node at (17.5,5) {$\cdots$};
\node[vertex] (vp) at (19.5,5) [label=above:$x_p$] {};
%\node[vertex] (v5) at (23,4) {};
\node at (23,5) {$X$};

\filldraw[fill=black!5!white, draw=gray] (-1,-1.5) rectangle (7,1); 
\node at (1,-2.2) {$Y_0$};
\node[vertex] (w1) at (0,0) {}; 
\node[vertex] (w2) at (2,0) {}; %[label=below:$w_i$] {}; 
\node[vertex] (w22) at (4,0) {}; 
\node[vertex] (w3) at (6,0) {}; 

\filldraw[fill=black!5!white, draw=gray] (10,-1.5) rectangle (21,1); 
\node at (18,-2.2) {$Y_1$};
\node[vertex] (y1) at (11,0) [label=below:$y_1$]{}; 
\node[vertex] (y11) at (12,0) {};
\node[vertex] (y111) at (13,0) {};

\node[vertex] (y2) at (15,0) [label=below:$y_2$]{}; 
\node[vertex] (y22) at (16,0) {}; 

\node[vertex] (yp) at (19,0) [label=below:$y_p$] {}; 
\node[vertex] (ypp) at (20,0) {};
%\node[vertex] (wv5) at (23,0) {};
\node at (23,0) {$Y$};

\draw (u1) -- (w1); \draw (u1) -- (w2); \draw (u1) -- (w22); \draw (u1) -- (w3); 
\draw (u2) -- (w1); \draw (u2) -- (w22); \draw (u3) -- (w22); \draw (u3) -- (w3); \draw (u4) -- (w3);  
 
\draw (v1) -- (w22); \draw (v1) -- (w3);
\draw (vp) -- (w3); 

\draw[very thick] (v1) -- (y1); \draw (v1) -- (y11); \draw (v1) -- (y111); 
\draw[very thick] (v2) -- (y2); \draw (v2) -- (y22);
\draw[very thick] (vp) -- (yp); \draw (vp) -- (ypp); 

\end{tikzpicture}
\end{center}
\caption{Proof of Lemma~\ref{lem:star-convex-1} (ii) illustrated.}\label{fig:lemma1ii}
\end{figure}
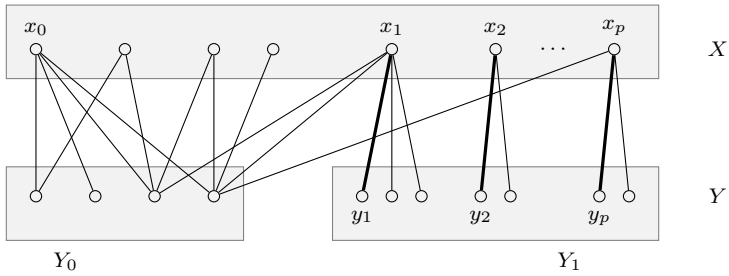
Then, as no two of $y_1,\ldots, y_p$ have a common neighbor in $X$, 
\[\text{$\{y_1,\ldots,y_p\}$ is a stable set in $B^2[Y]$}.\]
Thus, $B^2[Y]$ is obtained from the split graph $B^2[Y_0\cup\{y_1,\ldots,y_p\}]$ by substituting $y_i$ by clique with vertex set $N(x_i)\cap Y_1$, $1\le i\le p$.
\qed
\end{proof}

The following facts show that the reverse statements in Lemma~\ref{lem:star-convex-1} hold true, too.

\begin{fact}\label{fact:1}
Let $G=(V,E_G)$ have at most one big connected component, and let the big connected component have a universal vertex. Then $G$ is half-square of a star convex bipartite graph.
\end{fact}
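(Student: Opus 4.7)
The plan is to use the vertex-clique incidence bipartite graph of $G$, together with a suitable star, and verify star convexity directly. First I would handle the trivial case: if $G$ has no big connected component, then $G$ is edgeless, so taking $W=\emptyset$ (or any edgeless bipartite graph on $V\cup W$) already exhibits $G$ as a half-square of a star convex bipartite graph. Henceforth assume the (unique) big connected component $C$ exists and has a universal vertex $u\in C$.

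Next I would take $B=(V,\mathcal{C}(G),E_B)$ to be the vertex-clique incidence bipartite graph of $G$, and set the associated star to be $T=(V,\{uv\mid v\in V\setminus\{u\}\})$ with center $u$. By Proposition~\ref{prop:subdivision-incidence}, $G=B^2[V]$, so the only thing left is to certify that $B$ is star $V$-convex with respect to $T$, i.e., that for every maximal clique $Q\in\mathcal{C}(G)$ the neighborhood $N_B(Q)=Q$ induces a subtree of $T$. Since $T$ is a star at $u$, the subtrees of $T$ are exactly the singletons and the subsets of $V$ containing $u$.

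The heart of the argument is therefore the following: every maximal clique $Q$ of $G$ with $|Q|\ge 2$ contains $u$. Indeed, such a $Q$ contains at least one edge, so $Q$ lies in the big connected component $C$. Because $u$ is universal in $C$, $u$ is adjacent to every vertex of $Q$, whence $Q\cup\{u\}$ is a clique. Maximality of $Q$ then forces $u\in Q$, and hence $Q$ is a substar of $T$ centered at $u$. Singleton maximal cliques (arising from isolated vertices of $G$) are trivially subtrees. This proves that $B$ is star convex and completes the construction.

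The only mildly delicate point is the uniqueness of the big component: if $G$ were allowed to have a second big component, a maximal clique living entirely in that component could avoid $u$ and fail to be a substar of $T$. The hypothesis of the statement rules this out, and so the construction goes through without any further work.
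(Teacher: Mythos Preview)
Your proof is correct and takes a genuinely different route from the paper's. The paper constructs an explicit bipartite half-root by hand: it partitions $V$ into the universal vertices $U$ of the big component, the isolated vertices $I$, and the remainder $R=V\setminus(U\cup I)$, and sets $W=R'\cup E_R$ (a copy of $R$ together with the edge set of $G[R]$); then $B[U,W]$ is made a biclique, $B[R,R']$ a perfect matching, and $B[R,E_R]$ the subdivision of $G[R]$. Your approach, by contrast, simply reuses the vertex--clique incidence bipartite graph and Proposition~\ref{prop:subdivision-incidence}, reducing everything to the clean observation that every non-singleton maximal clique of $G$ must contain the universal vertex $u$ and is therefore a substar of $T$. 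This is shorter and more conceptual. What the paper's construction buys is a half-root of size $|V|+|W|\le 2|V|+|E_G|$, which feeds directly into Corollary~\ref{cor:starconvex} (linear-time construction of a half-root); your half-root has $|W|=|\mathcal{C}(G)|$, and since $G[C\setminus\{u\}]$ can have exponentially many maximal cliques (take it to be a cocktail-party graph $\overline{nK_2}$, for instance), your witness may be exponentially large. For the bare existence statement of Fact~\ref{fact:1} this is immaterial, but it explains why the paper opts for the more laborious explicit construction.
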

\begin{proof}
It is obvious that graphs having no edges are half-squares of star convex bipartite graphs. 
So, let $U$ be the set of universal vertices of the big component of $G$, let $I$ be the set of isolated vertices of $G$, and let $R$ be set of the remaining vertices of $G$, $R=V\setminus (U\cup I)$. Thus, $G[U\cup R]$ is the big component of $G$. 

If $R=\emptyset$, let $B=(V,W,E_B)$ with $W=\{w\}$ and $E_B=\{wu\mid u\in U\}$.  
If $R\not=\emptyset$, construct a bipartite graph $B=(V, W, E_B)$ as follows.
\begin{itemize}
  \item $W=R'\cup E_R$, where $R'=\{x'\mid x\in R\}$ and $E_R= E_{G[R]}$ is the edge set (possibly empty) of $G[R]$,
  \item $B[U,R'\cup E_R]$ is a biclique, the edge set of $B[R,R']$ is the perfect matching $\{xx'\mid x\in R\}$,
  \item the remaining edges of $B$ are between $R$ and $E_R$. Two vertices $x\in R$ and $e\in E_R$ are adjacent in $B$ if and only if $x$ is an endvertex of the edge $e$ in $G[R]$. That is, $B[R,E_R]$ is the subdivision of $G[R]$.
\end{itemize}
It is not difficult to verify, by construction, that $G=B^2[V]$. Moreover, $B$ is star convex: Fix a vertex $u\in U$ and let $T=(V,E_T)$ be the star with edge set $E_T=\{uv\mid v\in V\setminus \{u\}\}$. Clearly, for every $w\in W$, $N(w)$ forms a substar in $T$.
\qed
\end{proof}

\begin{fact}\label{fact:2}
Graphs obtained from split graphs by substituting vertices by cliques are half-squares of star convex bipartite graphs.
\end{fact}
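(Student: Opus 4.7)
The plan is to build an explicit star convex bipartite half-root of $G$, mirroring the structure that Lemma~\ref{lem:star-convex-1}(ii) extracts from every star convex $B$. Fix a representation of $G$ as arising from a split graph $S=(Q\,\dot\cup\, I,E_S)$, with $Q$ a clique and $I$ a stable set of $S$, by substituting each $v\in Q\cup I$ with a clique $K_v$, so that $V_G=\bigcup_{v\in Q\cup I}K_v$.

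I will take $B=(X,V_G,E_B)$ with $X=\{x_0\}\cup\{x_i\mid i\in I\}$ and edges
\begin{itemize}
\item $x_0y\in E_B$ for every $y\in\bigcup_{v\in Q}K_v$,
\item $x_iy\in E_B$ for every $i\in I$ and every $y\in K_i\cup\bigcup_{v\in N_S(i)}K_v$.
\end{itemize}
Let $T=(X,\{x_0x_i\mid i\in I\})$ be the star on $X$ with centre $x_0$.

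The verification splits into two parts. First, $B$ is star convex with respect to $T$: any $y\in\bigcup_{v\in Q}K_v$ has $x_0\in N_B(y)$, so $N_B(y)$ induces a substar of $T$; any $y\in K_i$ with $i\in I$ has $N_B(y)=\{x_i\}$, a trivial substar (note that $i\notin Q$ forces $i\notin N_S(j)$ for all $j\in I$, so no other $x_j$ is adjacent to $y$). Second, $G=B^2[V_G]$ follows by inspecting the possible types of pairs $(y,y')$ according to which $K_v$ or $K_i$ contains each: the cluster $\bigcup_{v\in Q}K_v$ is a clique both in $G$ (since $Q$ is a clique in $S$) and in $B^2[V_G]$ (all its vertices share $x_0$); a pair $y\in K_v$ with $v\in Q$ and $y'\in K_i$ with $i\in I$ has a common $X$-neighbour iff $x_i\in N_B(y)$ iff $v\in N_S(i)$, exactly matching $G$; two vertices of the same $K_i$ share $x_i$; and vertices of distinct $K_i, K_{i'}$ ($i\neq i'$ in $I$) have the disjoint $X$-neighbourhoods $\{x_i\}$ and $\{x_{i'}\}$, matching the non-adjacency of $i$ and $i'$ in $S$.

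There is no real obstacle here; the construction is transparent once one realizes that it is the converse construction to Lemma~\ref{lem:star-convex-1}(ii). The only point worth stressing is that substituting clique-vertices of $S$ by cliques again yields a single clique cluster $\bigcup_{v\in Q}K_v$ in $G$, so a single centre vertex $x_0$ witnesses all edges inside that cluster simultaneously, while each stable-set vertex $i\in I$ contributes its own leaf $x_i$ of the star $T$.
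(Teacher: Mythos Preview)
Your proof is correct and takes essentially the same approach as the paper: the bipartite graph you construct is exactly the one the paper builds, only the paper presents it in two stages (first a star convex half-root $H$ for the underlying split graph, then substitution of each $v$ by a stable set $Q_v$ in $H$), whereas you describe the final $B$ directly. The star $T$ and the verification of $G=B^2[V_G]$ are the same in both arguments.
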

\begin{proof}
We first show that split graphs are half-squares of star convex bipartite graphs. 
Let $G=(V,E_{G})$ be a split graph with a partition of its vertex set $V=Q\cup S$ into clique $Q$ and stable set $S$. 
Construct a bipartite graph $H=(X,V, E_H)$ as follows.
\begin{itemize}
\item $X=\{x_0\}\cup\{x_s\mid s\in S\}$,
\item $\displaystyle E_H=\{x_0q\mid q\in Q\}\cup\{x_ss\mid s\in S\}\cup\bigcup_{s\in S} \{x_sq\mid q\in N_{G}(s)\}$.
\end{itemize}
By construction, $G=H^2[V]$. Moreover, $H$ is star convex with the associated star $T=(X,\{x_0x_s\mid s\in S\})$.

Now, if $G'=(V',E_{G'})$ is obtained from $G$ by substituting $v\in V$ by {\em clique\/} $Q_v$, then $G'$ clearly is the half-square $B^2[V']$ of the bipartite graph $B=(X,V',E_B)$ obtained from $H$ by substituting vertices $v$ by {\em stable sets\/} $Q_v$. Obviously, $B$ is star convex with the same star $T$ associated to $H$.   
\qed
\end{proof}

By Lemma~\ref{lem:star-convex-1} and Facts~\ref{fact:1} and~\ref{fact:2}, we obtain:
\begin{theorem}\label{thm:star-convex}
A graph is a half-square of a star convex bipartite graph if and only if  
\begin{itemize}
\item[\em (i)]  it has at most one big connected component and the big connected component has a universal vertex, or
\item[\em (ii)] it is obtained from a split graph by substituting vertices by cliques. 
\end{itemize}
\end{theorem}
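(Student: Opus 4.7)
The proof is essentially an assembly of Lemma~\ref{lem:star-convex-1}, Fact~\ref{fact:1}, and Fact~\ref{fact:2}, with one small case analysis for the ``only if'' direction. My plan is as follows.

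For the ``if'' direction, I would simply invoke the two facts: if $G$ satisfies (i), then Fact~\ref{fact:1} produces an explicit star convex bipartite graph $B$ with $G=B^2[V]$; if $G$ satisfies (ii), then Fact~\ref{fact:2} produces such a $B$. So in either case $G$ is a half-square of a star convex bipartite graph.

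For the ``only if'' direction, suppose $G=B^2[V]$ for some star convex bipartite graph $B=(V,W,E_B)$. By definition, ``star convex'' means $B$ is either tree $V$-convex or tree $W$-convex with the associated tree being a star. I would split into two cases according to which color class carries the star. If the star $T$ lives on $V$, then Lemma~\ref{lem:star-convex-1}(i) applies directly to $B=(V,W,E_B)$ and yields that $G=B^2[V]$ has at most one big connected component and the big component has a universal vertex, hence (i) holds. Otherwise the star lives on $W$; relabeling, apply Lemma~\ref{lem:star-convex-1} to $B=(W,V,E_B)$ (so that $W$ plays the role of $X$ and $V$ the role of $Y$ in the lemma), and clause (ii) of that lemma tells us that $G=B^2[V]$ is obtained from a split graph by substituting vertices by cliques, so (ii) holds.

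Since there is nothing substantive to prove beyond citing the preceding lemma and facts, I do not foresee a real obstacle. The only thing to be careful about is the convention on which side of the bipartition is the ``star'' side when invoking Lemma~\ref{lem:star-convex-1}: the lemma is stated with the star on $X$ and produces two different conclusions about $B^2[X]$ and $B^2[Y]$, which correspond respectively to conditions (i) and (ii) of the theorem depending on whether the input graph $G$ is read off from the star side or the opposite side of the bipartition.
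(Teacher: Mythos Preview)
Your proposal is correct and matches the paper's approach exactly: the paper's proof is the single line ``By Lemma~\ref{lem:star-convex-1} and Facts~\ref{fact:1} and~\ref{fact:2}, we obtain,'' and your explicit case split on which color class carries the star is precisely the unpacking of that citation. The only remark is that your careful discussion of which side of the bipartition the star sits on is the right way to make the implicit argument explicit; there is nothing missing.
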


In case of star biconvex bipartite graphs, we obtain:
\begin{theorem}\label{thm:star-biconvex}
A graph is a half-square of a star biconvex bipartite graph if and only if it has at most one big connected component and the big connected component is obtained from a split graph having a universal vertex by substituting vertices by cliques. 
\end{theorem}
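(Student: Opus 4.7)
The plan is to deduce both directions from Lemma~\ref{lem:star-convex-1} applied to each of the two star-convex structures of $B$, together with the construction in Fact~\ref{fact:2}. The forward direction combines the ``same-side'' conclusion (i) with the ``other-side'' conclusion (ii) of the lemma, one from each star; the converse observes that a universal vertex in the underlying split graph automatically upgrades the Fact~\ref{fact:2} construction from star convex to star biconvex.

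For the forward direction, I would start with a star biconvex half-root $B=(V,W,E_B)$ of $G$. From star $V$-convexity, Lemma~\ref{lem:star-convex-1}(i) yields that $G=B^2[V]$ has at most one big component $G_0$ and that $G_0$ contains a universal vertex $u^*$. From star $W$-convexity, applying Lemma~\ref{lem:star-convex-1}(ii) with the roles of $X$ and $Y$ swapped yields that $G=B^2[V]$ is obtained from some split graph $H$ by substituting each $v\in V(H)$ by a clique $Q_v$. Since substitution by nonempty cliques preserves components, $G_0$ arises from a single connected component $H_0$ of $H$ (itself a split graph as an induced subgraph of $H$) under the same substitution. The key observation is then that $u^*\in Q_{v^*}$ for some $v^*\in V(H_0)$, and universality of $u^*$ in $G_0$ translates edge-by-edge into universality of $v^*$ in $H_0$, giving exactly the required structure.

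For the converse, let $G_0$ be obtained from a split graph $H=(Q\,\dot\cup\, S,E_H)$ with universal vertex $v^*\in Q$ by substituting $v\mapsto Q_v$. I would re-use the Fact~\ref{fact:2} construction, producing $B_0=(V(G_0),W_0,E_{B_0})$ with $W_0=\{w_0\}\cup\{w_s\mid s\in S\}$, where $w_0$ is adjacent to $\bigcup_{q\in Q}Q_q$ and $w_s$ is adjacent to $Q_s\cup\bigcup_{q\in N_H(s)}Q_q$. Fact~\ref{fact:2} already ensures $B_0^2[V(G_0)]=G_0$ and star $W_0$-convexity with center $w_0$. The additional star $V(G_0)$-convexity falls out for free from the universal-vertex hypothesis: picking any $u^*\in Q_{v^*}$ as center, $u^*\in N(w_0)$ holds since $v^*\in Q$, and $u^*\in N(w_s)$ holds for every $s\in S$ since $v^*\in N_H(s)$ by universality of $v^*$ in $H$. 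Isolated vertices of $G$ are appended to $B$ as isolated vertices and as additional leaves of both stars (and $B$ is taken empty if $G$ is edgeless), producing the desired star biconvex half-root of $G$.

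The principal obstacle I anticipate is the translation step between the universal vertex of $G_0$ and a universal vertex of the underlying split graph: this is the only substantive combinatorial observation, and it ties the two halves of Lemma~\ref{lem:star-convex-1} together. Everything else is bookkeeping built on Fact~\ref{fact:2} and Lemma~\ref{lem:star-convex-1}, with the star centers on both sides of $B$ chosen deliberately around $v^*$.
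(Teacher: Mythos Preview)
Your proposal is correct and follows essentially the same approach as the paper: the forward direction combines Lemma~\ref{lem:star-convex-1}(i) and (ii) applied to the two stars (you spell out the universal-vertex translation that the paper leaves implicit in its one-line appeal to Theorem~\ref{thm:star-convex}), and the converse is the same Fact~\ref{fact:2}-style construction, with star convexity on the $V$-side coming for free from the universal vertex. The only cosmetic difference is your handling of isolated vertices---the paper gives each a private partner $x_i$ on the other side rather than leaving it truly isolated in $B$---but either choice works.
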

\begin{proof}
The necessity part follows directly from Theorem~\ref{thm:star-convex}. 

For the sufficiency part, we first show that split graphs in which the big connected component has a universal vertex are half-squares of star biconvex bipartite graphs.   
Let $G=(V,E_{G})$ be a split graph with a partition of its vertex set $V=Q\cup S$ into clique $Q$ and stable set $S$. Let $I\subseteq S$ be the set of all isolated vertices  of $G$ (thus, $G-I$ is the big component of $G$).    
Construct a bipartite graph $H=(X,V, E_H)$ as follows. 
%(The construction is similar to the one in Fact~\ref{fact:2} but not quite the same.)
\begin{itemize}
\item $X=\{x_0\}\cup\{x_s\mid s\in S\setminus I\}\cup\{x_i\mid i\in I\}$,
\item $\displaystyle E_H=\{x_0q\mid q\in Q\}\cup\{x_ss\mid s\in S\setminus I\} \cup\{x_ii\mid i\in I\}\cup\bigcup_{s\in S\setminus I} \{x_sq\mid q\in N_{G}(s)\}$.
\end{itemize}
By construction, $G=H^2[V]$. Moreover, $H$ is star biconvex with the associated stars $T_1=(X,\{x_0x_s\mid s\in S\setminus I\}\cup \{x_0x_i\mid i\in I\})$ and $T_2=(V,\{uv\mid v\in V\setminus\{u\}\})$, where $u\in Q$ is a universal vertex of $G-I$ (hence $u\in N_G(s)$ for all $s\in S\setminus I$).

Now, if $G'=(V',E_{G'})$ is obtained from $G$ by substituting vertices $v\in V\setminus I$ by {\em cliques\/} $Q_v$, then $G'$ clearly is the half-square $B^2[V']$ of the bipartite graph $B=(X,V',E_B)$ obtained from $H$ by substituting vertices $v$ by {\em stable sets\/} $Q_v$. Obviously, $B$ is star biconvex with the same star $T_1$ associated to $H$ and the star $T_2'=(V',\{u'v'\mid v'\in V'\setminus\{u'\}\})$, where $u'$ is a vertex in $Q_u$ ($u$ is a universal vertex of $G-I$).  
\qed
\end{proof}

By definition, if $G$ is obtained from a graph $H$ by substituting vertex $v\in V(H)$ by clique $Q_v$ with $|Q_v|\ge 2$, then $Q_v$ is a {\em module\/} in $G$, that is, every vertex in $G$ outside $Q_v$ is adjacent to all or to none vertices in $Q_v$. Now, note that all maximal clique modules of a given graph can be computed in linear time (see, e.g, \cite[Corollary 7.4]{McConnell03}). Note also that split graphs can be recognized in linear time (cf.~\cite{Golumbic}), and a partition into a clique and a stable set of a given split graph can be computed in linear time (\cite{HeggernesK07}). Thus, Theorems~\ref{thm:star-convex} and \ref{thm:star-biconvex} and their proofs imply: 
\begin{corollary}\label{cor:starconvex}
\textsc{Half-Square Of Star Convex} and \textsc{Half-Square Of Star Biconvex} can be solved in linear time. A star (bi)convex bipartite half-root, if any, can be constructed in linear time. 
\end{corollary}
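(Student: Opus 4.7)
The plan is to algorithmify the characterizations in Theorems~\ref{thm:star-convex} and \ref{thm:star-biconvex}, using as black boxes the linear-time computation of maximal clique modules from~\cite{McConnell03}, the linear-time recognition of split graphs from~\cite{Golumbic}, and the linear-time construction of a clique/stable-set partition from~\cite{HeggernesK07}. For a given input $G=(V,E_G)$ with $n=|V|$ and $m=|E_G|$, I would first compute the connected components by DFS; this gives condition~(i) of Theorem~\ref{thm:star-convex} almost for free: count the big components, and if there is exactly one, scan it for a vertex whose degree is one less than the component size.

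If (i) fails, I test condition~(ii) by computing the partition of $V$ into maximal clique modules, contracting each class to a single vertex to obtain the quotient $G^{\ast}$, and checking in linear time whether $G^{\ast}$ is a split graph. For the star biconvex test via Theorem~\ref{thm:star-biconvex}, I first require at most one big component (otherwise reject), then form $G^{\ast}$ on that big component and check additionally that the clique side of a split partition of $G^{\ast}$ contains a vertex adjacent to every other vertex of $G^{\ast}$. Each of these steps is $O(n+m)$, so both tests run in linear time.

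If a test succeeds, a half-root can be output by following the explicit constructions in the proofs of Fact~\ref{fact:1}, Fact~\ref{fact:2}, and Theorem~\ref{thm:star-biconvex}: once the clique/stable-set partition of $G^{\ast}$ is in hand, the associated star and the bipartite graph $B$ are specified directly from that partition together with the clique modules of $G$, giving $O(n+m)$ vertices and edges that can be written out in a single linear pass.

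The step that requires the most care, and which I regard as the main obstacle, is justifying that the quotient $G^{\ast}$ by maximal clique modules is the correct witness. Specifically, one must argue that if $G$ arises from \emph{any} split graph $S$ by substituting vertices with cliques, then $G^{\ast}$ is itself a split graph (and carries a universal vertex when $S$ does). This follows because each substituted clique $Q_v$ is a clique module of $G$, hence the maximal-clique-module partition refines the substitution partition by merging only twin vertices of $S$; identifying twins in a split graph yields a split graph, and identifying twins that include a universal vertex yields a graph still containing a universal vertex. Conversely, whenever $G^{\ast}$ is a split graph, $G$ is by construction obtained from $G^{\ast}$ by clique substitution, so Theorems~\ref{thm:star-convex} and~\ref{thm:star-biconvex} apply.
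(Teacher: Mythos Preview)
Your proposal is correct and follows essentially the same approach as the paper: compute maximal clique modules (via~\cite{McConnell03}), pass to the quotient, and invoke linear-time split recognition (\cite{Golumbic}) and partition (\cite{HeggernesK07}), then read off the half-root from the constructions in Facts~\ref{fact:1},~\ref{fact:2} and Theorem~\ref{thm:star-biconvex}. Your justification that the quotient $G^\ast$ is the right witness is in fact more careful than what the paper spells out; one terminological slip is that the maximal-clique-module partition \emph{coarsens} (not ``refines'') the substitution partition, but your subsequent argument---that $G^\ast$ is obtained from $S$ by identifying true twins, hence is an induced subgraph of $S$ and therefore split (and retains a universal vertex if $S$ has one)---is sound.
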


%%%%%%%%%%%%%%%%%%%%%%%%%%%%%%%%%%%%%%%%%%%%%%%%%%%%%%%%%%%%%%%%%%%%%%%%%%%%%%%%

\section{Half-squares of biconvex and convex bipartite graphs}\label{sec:convex}

%%%%%%%%%%%%%%%%%%%%%%%%%%%%%%%%%%%%%%%%%%%%%%%%%%%%%%%%%%%%%%%%%%%%%%%%%%%%%%%%
In this section, we show that half-squares of convex bipartite graphs are precisely the interval graphs and half-squares of biconvex bipartite graphs are precisely the unit interval graphs.

Recall that $G=(V,E_G)$ is an interval graph if it admits an interval representation $I(v), v\in V$, such that two vertices in $G$ are adjacent if and only if the corresponding intervals intersect. 
Let $G$ be an interval graph. It is well-known (\cite{FulkersonG,GilmoreH}) 
that there is a linear ordering of the maximal cliques of $G$, 
say ${\cal C}(G)=(Q_1,\ldots,Q_q)$,
such that every vertex of $G$ belongs to maximal cliques 
that are consecutive in that ordering,
that is, for every vertex $u$ of $G$, 
there are indices $\ell(u)$ and $r(u)$ with
\[\{ i\mid 1\leq i\leq q\mbox{ and }u\in Q_i\}=\{ i\mid \ell(u)\leq i\leq r(u)\}.\]

If $C$ and $D$ are distinct maximal cliques of $G$,
then $C\setminus D$ and $D\setminus C$ are both not empty,
that is, for every $j\in\{ 1,\ldots,q\}$, 
there are vertices $u$ and $v$ 
such that $r(u)=\ell(v)=j$.

Recall that unit interval graphs are those interval graphs admitting an interval representation in which all intervals have the same length. It is well known (\cite{Roberts}) that a graph is a unit interval graphs if and only if it has an interval representation in which no interval is properly contained in another interval (a proper interval graph), if and only if it is a $K_{1,3}$-free interval graph.

\begin{lemma}\label{lem:biconvex-claw}
The half-squares of a biconvex bipartite graph are $K_{1,3}$-free.
\end{lemma}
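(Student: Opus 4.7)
The plan is to use the linear ordering on $X$ witnessing $X$-convexity to rule out an induced $K_{1,3}$ in $B^2[X]$; by symmetry, the analogous ordering on $Y$ takes care of $B^2[Y]$, which together cover both half-squares of a biconvex bipartite graph. In fact $X$-convexity alone will suffice for $B^2[X]$, and biconvexity enters only because we want both half-squares to be claw-free at once.

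Suppose for contradiction that $B^2[X]$ contains an induced $K_{1,3}$ with center $c \in X$ and leaves $a_1,a_2,a_3 \in X$. For each $i \in \{1,2,3\}$ pick a vertex $y_i \in N_B(c) \cap N_B(a_i)$, which exists because $c$ is adjacent to $a_i$ in $B^2[X]$. Pairwise non-adjacency of the $a_i$ in $B^2[X]$ yields $N_B(a_i) \cap N_B(a_j) = \emptyset$ for $i \neq j$. Using the $X$-ordering in which every $N_B(y)$ is an interval, relabel so that $a_1 < a_2 < a_3$.

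The proof is then a short case analysis on the position of $c$ in this ordering. If $c$ lies outside the range of the $a_i$, say $c < a_1$, then $\{c,a_3\} \subseteq N_B(y_3)$ and the interval property force $a_2 \in N_B(y_3)$, so $y_3 \in N_B(a_2) \cap N_B(a_3)$, a contradiction; the case $c > a_3$ is symmetric. If $c$ lies strictly between two of the $a_i$, say $a_1 < c < a_2$, then $\{c,a_3\} \subseteq N_B(y_3)$ again forces $a_2 \in N_B(y_3)$, contradiction; the case $a_2 < c < a_3$ is symmetric, now using $y_1$ in place of $y_3$ (the interval $N_B(y_1)$ contains $\{a_1,c\}$ and hence $a_2$). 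Since every possible position of $c$ leads to a contradiction, $B^2[X]$ is $K_{1,3}$-free, and applying the same argument with $X$ and $Y$ swapped completes the proof.

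The only thing to watch is the bookkeeping of the cases, but up to symmetry there are just two, and each is immediate once we read the crucial ``sandwiched'' $a_j$ off the ordering; no extra machinery is required.
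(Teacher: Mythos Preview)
Your argument is correct, but it differs from the paper's. The paper picks common neighbours $y_i,y_j,y_k$ in $Y$ exactly as you do, observes that the seven vertices $x_1,x_2,x_3,x_4,y_i,y_j,y_k$ induce a subdivision of $K_{1,3}$ in $B$, and then appeals to the known structural fact that biconvex bipartite graphs contain no induced subdivision of $K_{1,3}$. Your proof instead works directly from the interval property of $X$-convexity: by ordering the leaves and locating the centre, you force one of the witness vertices $y_i$ to pick up an extra leaf, contradicting pairwise non-adjacency. This is more elementary and self-contained (no auxiliary forbidden-subgraph fact is needed), and it actually proves the slightly stronger statement that $X$-convexity alone already makes $B^2[X]$ claw-free; the paper reaches this same strengthening only later, in the proof of Theorem~\ref{thm:convex}, via a detour through the reduced bipartite graph $B'$. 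The paper's route, on the other hand, makes the obstruction visible as a concrete forbidden induced subgraph of $B$, which is conceptually pleasant and ties the lemma to the known characterisation of biconvex graphs.
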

\begin{proof}
Let $B=(X,Y,E_B)$ be a biconvex bipartite graph. By symmetry we need only to show that $B^2[X]$ is $K_{1,3}$-free. 
Suppose, by contradiction, that $x_1, x_2, x_3, x_4$ induce a $K_{1,3}$ in $B^2[X]$ with edges $x_1x_2, x_1x_3$ and $x_1x_4$. Let $y_i$ be a common neighbor of $x_1$ and $x_2$, $y_j$ be a common neighbor of $x_1$ and $x_3$, and $y_k$ be a common neighbor of $x_1$ and $x_4$. Then, $y_i, y_j, y_k$ are pairwise distinct and induce, in $B$, a subdivision of $K_{1,3}$; see also Figure~\ref{fig:SK13}. This is a contradiction because biconvex bipartite graphs do not contain an induced subdivision of the $K_{1,3}$. Thus, the half-squares of a biconvex bipartite graph are $K_{1,3}$-free.
\qed
\end{proof}

\begin{figure}[ht]
\begin{center}
%\begin{tikzpicture}[scale=.6]
\begin{tikzpicture}[scale=.39]
\tikzstyle{vertex}=[draw,circle,inner sep=1.5pt]; 

\node[vertex] (x1) at (0,3) [label=above:$x_1$] {}; 
\node[vertex] (x2) at (3,3) [label=above:$x_2$] {}; 
\node[vertex] (x3) at (6,3) [label=above:$x_3$] {}; 
\node[vertex] (x4) at (9,3) [label=above:$x_4$] {}; 
\node[vertex] (yi) at (0,0) [label=below:$y_i$] {}; 
\node[vertex] (yj) at (3,0) [label=below:$y_j$] {};
\node[vertex] (yk) at (6,0) [label=below:$y_k$] {};

\draw (x1) -- (yi) -- (x2); \draw (x1) -- (yj) -- (x3); \draw (x1) -- (yk) -- (x4);

\end{tikzpicture}
\end{center}
\caption{The subdivision of $K_{1,3}$ is convex, but not biconvex.}\label{fig:SK13}
\end{figure}
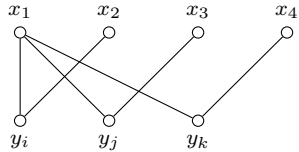

%\vspace*{-3ex}
\begin{lemma}\label{lem:convex}
\mbox{}
\begin{itemize}
\item[\em (i)] Every interval graph is half-square of a convex bipartite graph. More precisely, if $G=(V,E_G)$ is an interval graph and $B=(V,{\cal C}(G), E_B)$ is the vertex-clique incidence bipartite graph of $G$, then $G=B^2[V]$ and $B$ is ${\cal C}(G)$-convex.
\item[\em (ii)] If $B=(X,Y,E_B)$ is $X$-convex, then $B^2[Y]$ is an interval graph. 
\end{itemize}
\end{lemma}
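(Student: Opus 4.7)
\medskip\noindent
\textbf{Plan.} Both parts are essentially observations about the cited characterization of interval graphs via consecutive clique arrangements, but let me describe how I would set them up.

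For (i), I would start by invoking Proposition~\ref{prop:subdivision-incidence}, which immediately gives $G=B^2[V]$ for $B=(V,\mathcal{C}(G),E_B)$, with no extra work required. The content of the statement is the $\mathcal{C}(G)$-convexity of $B$. Here I would use the Gilmore--Hoffman consecutive clique arrangement cited earlier in the section: order $\mathcal{C}(G)=(Q_1,\ldots,Q_q)$ so that for every $u\in V$ the set of indices $i$ with $u\in Q_i$ is an interval $[\ell(u),r(u)]$. In the bipartite graph $B$, the neighborhood of $u\in V$ in $\mathcal{C}(G)$ is exactly $\{Q_i : u\in Q_i\}$, which by this property forms an interval of the chosen ordering of $\mathcal{C}(G)$. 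Hence $B$ is $\mathcal{C}(G)$-convex.

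For (ii), fix a linear ordering $x_1<x_2<\cdots<x_n$ of $X$ witnessing the $X$-convexity of $B$. I would associate with each $y\in Y$ the closed real interval
\[
I(y)=[\min N(y),\, \max N(y)],
\]
reading the subscripts as real coordinates (and for isolated $y\in Y$ with $N(y)=\emptyset$, assigning a fresh pairwise-disjoint dummy interval outside the range $[1,n]$). The claim is that $I(\cdot)$ is an interval representation of $B^2[Y]$. One direction is immediate: if $y_1,y_2\in Y$ share a common neighbor $x_i\in X$, then $x_i\in N(y_1)\cap N(y_2)\subseteq I(y_1)\cap I(y_2)$. For the converse, suppose $I(y_1)\cap I(y_2)\neq\emptyset$ (so both $N(y_1),N(y_2)$ are nonempty), and WLOG $\min N(y_1)\le\min N(y_2)$. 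Intersection then forces $\min N(y_2)\le\max N(y_1)$, so the vertex $x:=\min N(y_2)$ lies in the interval $[\min N(y_1),\max N(y_1)]$, which by $X$-convexity equals $N(y_1)$. Thus $x\in N(y_1)\cap N(y_2)$, so $y_1,y_2$ are adjacent in $B^2[Y]$.

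The only subtlety is making sure the equality $N(y_1)=[\min N(y_1),\max N(y_1)]\cap X$ is used correctly — this is exactly the $X$-convexity hypothesis — and handling the empty-neighborhood case cleanly, which is why the dummy disjoint intervals are introduced. I do not foresee a real obstacle; the proof is essentially a direct translation of the consecutive-ones / $X$-convex structure into intervals on the real line. Part (i) is just repackaging Gilmore--Hoffman, and part (ii) is the standard dual: interval representations of $B^2[Y]$ are read off from the $X$-convex ordering by collapsing each neighborhood interval to its convex hull.
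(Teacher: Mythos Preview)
Your proposal is correct and follows essentially the same approach as the paper: for (i), invoke Proposition~\ref{prop:subdivision-incidence} for $G=B^2[V]$ and the Gilmore--Hoffman consecutive clique ordering for $\mathcal{C}(G)$-convexity; for (ii), use the $X$-convex ordering to read off an interval representation of $B^2[Y]$ via the neighborhoods $N(y)$. The paper's version of (ii) is terser---it simply takes the sets $N(y)\subseteq X$ themselves as the intervals---while you pass to real convex hulls $[\min N(y),\max N(y)]$ and are more careful about the empty-neighborhood case, but these are cosmetic differences.
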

\begin{proof}

\noindent
(i): Let $G=(V,E_G)$ be an interval graph, and let $B=(V,{\cal C}(G),E_B)$ be the vertex-clique incidence bipartite graph of $G$. Since each $v\in V$ appears in the interval $\{Q_i\mid \ell(v)\le i\le r(v)\}$ in ${\cal C}(G)=(Q_1,\ldots, Q_q)$, $B$ is ${\cal C}(G)$-convex. Moreover, by Proposition~\ref{prop:subdivision-incidence}, $G=B^2[V]$. 

\smallskip\noindent
(ii): This is because $X$ admits a linear ordering such that, for each $y\in Y$, $N(y)$ is an interval in $X$. This collection is an interval representation of $B^2[Y]$ because $y$ and $y'$ are adjacent in $B^2[Y]$ if and only if $N(y)\cap N(y')\not=\emptyset$. 
\qed
\end{proof}

\begin{theorem}\label{thm:biconvex}
A graph is half-square of a biconvex bipartite graph if and only if it is a unit interval graph.
\end{theorem}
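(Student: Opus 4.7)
The plan is to establish both directions using results already at hand: Lemma~\ref{lem:biconvex-claw}, Lemma~\ref{lem:convex}, and Roberts' characterization of unit interval graphs as the $K_{1,3}$-free interval graphs.

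For the ``only if'' direction, suppose $G = B^2[X]$ for some biconvex bipartite graph $B=(X,Y,E_B)$. Since $B$ is in particular $Y$-convex, Lemma~\ref{lem:convex}(ii) (applied with the roles of $X$ and $Y$ swapped) tells us that $B^2[X]=G$ is an interval graph. On the other hand, Lemma~\ref{lem:biconvex-claw} says that $G$ is $K_{1,3}$-free. By Roberts' theorem, a $K_{1,3}$-free interval graph is a unit interval graph, so $G$ is unit interval.

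For the ``if'' direction, let $G=(V,E_G)$ be a unit interval graph, and take $B=(V,\mathcal{C}(G),E_B)$ to be its vertex-clique incidence bipartite graph. By Lemma~\ref{lem:convex}(i), we already have $G=B^2[V]$ and $B$ is $\mathcal{C}(G)$-convex, so only the $V$-convexity of $B$ remains to be shown. For this I would invoke the standard fact that a unit (equivalently, proper) interval graph admits a linear ordering of its vertices in which every maximal clique appears as a set of consecutive vertices. Concretely, fix a proper interval representation $\{I(v)\}_{v\in V}$ of $G$ and order $V$ by the left endpoints of the $I(v)$; since no interval properly contains another, this ordering coincides with the ordering by right endpoints, and a routine argument shows that for each maximal clique $Q\in\mathcal{C}(G)$ the set $Q$ is an interval in this ordering. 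Under this linear ordering on $V$, the neighborhood $N_B(Q)=Q$ is an interval for every $Q\in\mathcal{C}(G)$, which is exactly $V$-convexity of $B$. Combined with the $\mathcal{C}(G)$-convexity, this makes $B$ biconvex, completing the construction.

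The main obstacle is the ``if'' direction, specifically justifying that the proper interval ordering of $V$ makes every maximal clique consecutive. This is classical (it follows from the standard consecutive-ones structure of a proper interval representation), but care is needed so that the same ordering works simultaneously for all maximal cliques; the cleanest route is to argue directly from the proper interval representation, using that for any two vertices $u<v$ in the left-endpoint ordering with $uv\in E_G$, every $w$ with $u<w<v$ satisfies $uw,wv\in E_G$, and then to deduce that any maximal clique, being determined by a point covered by its members' intervals, is an interval of the ordering.
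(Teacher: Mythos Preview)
Your proof is correct and follows the same overall strategy as the paper: both directions use Lemmas~\ref{lem:biconvex-claw} and~\ref{lem:convex} together with Roberts' theorem, and for the ``if'' direction both take the vertex--clique incidence bipartite graph $B=(V,\mathcal{C}(G),E_B)$ and reduce everything to verifying $V$-convexity of $B$.

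The one genuine difference is in how $V$-convexity is established. The paper orders $V$ lexicographically by the pairs $(\ell(v),r(v))$ coming from the consecutive--clique ordering on $\mathcal{C}(G)$, assumes some $Q_i$ fails to be an interval in this ordering, and derives a contradiction by explicitly exhibiting an induced $K_{1,3}$ in $G$. You instead order $V$ by left endpoints of a proper interval model and argue directly via the umbrella property (if $u<w<v$ and $uv\in E_G$ then $uw,wv\in E_G$), together with Helly, that every maximal clique is an interval of this order. Your route is a bit more self-contained, since it does not need to revisit $K_{1,3}$-freeness a second time; the paper's route has the virtue of staying entirely within the combinatorial clique structure without re-invoking a geometric model. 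Both arguments are standard and of comparable length.
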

\begin{proof}
First, by Lemma~\ref{lem:convex} (ii), half-squares of biconvex bipartite graphs are interval graphs, and then by Lemma~\ref{lem:biconvex-claw}, half-squares of biconvex bipartite graphs are unit interval graphs.

Next we show that every unit interval graph is half-square of some biconvex bipartite graph. Let $G=(V,E_G)$ be a unit interval graph. Let $B=(V, {\cal C}(G), E_B)$ be the vertex-clique incidence bipartite graph of $G$. By Lemma~\ref{lem:convex} (i), $G=B^2[V]$ and $B$ is ${\cal C}(G)$-convex. We now are going to show that $B$ is $V$-convex, too.

Consider a linear order in ${\cal C}(G)$, ${\cal C}(G)=(Q_1,\ldots, Q_q)$, such that each $v\in V$ is contained in exactly the cliques $Q_i$, $\ell(v)\le i\le r(v)$. Let $v\in V$ be lexicographically sorted according $(\ell(v),r(v))$. 
We claim that $B$ is $V$-convex with respect to this ordering. Assume, by a contradiction, that some $Q_i$ has neighbors $v,u$ and non-neighbor $x$ with $v < x < u$ in the sorted list, say. In particular, $v, u$ belong to $Q_i$, but $x$ not; see also Figure~\ref{fig:biconvex}. % ???

\begin{figure}[ht]
\begin{center}
\begin{tikzpicture}[scale=.5]
\tikzstyle{vertex}=[draw,circle,inner sep=1.5pt]; 

\node (lv)   at (1,4) [label=above:$Q_{\ell(v)}$] {};
\node at (3,4) {$\ldots$};
\node (lx-1) at (5,4) [label=above:$Q_{\ell(x)-1}$] {}; 
\node (lx)   at (8,4) [label=above:$Q_{\ell(x)}$] {}; 
\node at (10,4) {$\ldots$};
\node (rx)   at (12,4) [label=above:$Q_{r(x)}$] {}; 
\node (rx+1) at (15,4) [label=above:$Q_{r(x)+1}$] {}; 
\node at (17,4) {$\ldots$};
\node (i)    at (19,4) [label=above:$Q_i$] {}; 

\node  at (1,3) {$v$}; 
\node  at (1,2) {$\vdots$};
\node  at (1,1) {$\vdots$};

\node  at (5,3) {$v$}; 
\node  at (5,2) {$y$};
\node  at (5,1) {$\vdots$};

\node  at (8,3) {$v$}; 
\node  at (8,2) {$x$};
\node  at (8,1) {$\vdots$};

\node  at (12,3) {$v$}; 
\node  at (12,2) {$x$};
\node  at (12,1) {$\vdots$};

\node  at (15,3) {$v$}; 
\node  at (15,2) {$z$};
\node  at (15,1) {$\vdots$};

\node  at (19,3) {$v$}; 
\node  at (19,2) {$u$};
\node  at (19,1) {$\vdots$};
\end{tikzpicture}
\end{center}
\caption{Assuming $v<x<u$, and $v,u\in Q_i$, but $x\not\in Q_i$.}\label{fig:biconvex}
\end{figure}
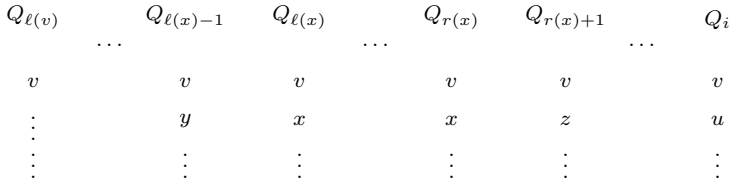

Since $x < u$ and $\ell(u)\le i$, we have $\ell(x) <i$. Since $x$ is not in $Q_i$, we therefore have   
\[\ell(x)\le r(x) < i.\] 
In particular, $r(x)+1\le i$. Since $v < x$ and $r(v) \ge i$, we have 
\[\ell(v) < \ell(x).\]  
Hence $\ell(x)-1 \ge 1$. Now, by the maximality of the cliques, there exists $y\in Q_{\ell(x)-1}$ with $r(y)=\ell(x)-1$ (hence $y$ is non-adjacent to $x$), and there exists $z\in Q_{r(x)+1}$ with $\ell(z)=r(x)+1$ (hence $z$ is non-adjacent to $x$ and $y$; note that $r(x)+1=i$ and $z=u$ are possible). But then $v, x, y$, and $z$ induce a $K_{1,3}$ in $G$, a contradiction.

Thus, we have seen that every unit interval graph is half-square of a biconvex bipartite graph.
\qed
\end{proof}

We next characterize half-squares of convex bipartite graphs as interval graphs. This is somehow surprising because the definition of being convex bipartite is asymmetric with respect to the two half-squares. 

\begin{theorem}\label{thm:convex}
A graph is a half-square of a convex bipartite graph if and only if it is an interval graph.
\end{theorem}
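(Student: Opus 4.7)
The plan is a two-direction argument, where the ``if'' direction is essentially free and the ``only if'' direction splits into two subcases according to which side of the convex bipartite graph the interval order lives on.

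For the ``if'' direction, if $G$ is an interval graph, then by Lemma~\ref{lem:convex}(i) the vertex-clique incidence bipartite graph $B=(V,{\cal C}(G),E_B)$ satisfies $G=B^2[V]$ and is ${\cal C}(G)$-convex, hence convex. Nothing further is needed.

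For the ``only if'' direction, suppose $G=B^2[V]$ with $B=(V,W,E_B)$ convex, so $B$ is either $W$-convex or $V$-convex. The $W$-convex case is immediate: Lemma~\ref{lem:convex}(ii), applied with the roles of $X,Y$ played by $W,V$, says exactly that $B^2[V]$ is an interval graph. The remaining case is $V$-convex, which is the one that has to be handled by hand: here Lemma~\ref{lem:convex}(ii) does not apply, since convexity is on the ``wrong'' side. The plan is to build an interval representation of $G$ directly. Fix a linear order $v_1<v_2<\cdots<v_n$ on $V$ witnessing $V$-convexity, so that for each $w\in W$ the neighborhood $N_B(w)=\{v_{\ell(w)},\ldots,v_{r(w)}\}$ is a consecutive block. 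The crucial observation is that two distinct vertices $v_i,v_j$ with $i<j$ are adjacent in $G$ if and only if there exists $w\in W$ with $\ell(w)\le i$ and $r(w)\ge j$. Assign to each $v_i$ the interval
\[ I(v_i):=[\,i,\ \beta_i\,], \qquad \beta_i:=\max\{r(w):w\in W,\ v_i\in N_B(w)\}, \]
with the convention $\beta_i:=i$ if $v_i$ has no neighbor in $B$. A short check shows that for $i<j$ one has $I(v_i)\cap I(v_j)\neq\emptyset$ iff $\beta_i\ge j$, and unfolding the definition of $\beta_i$ this is equivalent to the existence of the required witness $w$. So the $I(v_i)$'s form an interval representation of $G$, proving $G$ is an interval graph.

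The main obstacle is precisely the $V$-convex case, which looks asymmetric: intervals living in the wrong color class do not automatically yield interval structure on $B^2[V]$, since the neighborhoods $N_B(v)$ for $v\in V$ need not be intervals in any ordering of $W$. The key idea is that each $N_B(w)$, being an interval in $V$, tells us exactly how far to the right a given $v_i$ can ``reach'' via a common neighbor, and taking the maximum such reach defines $\beta_i$; membership of $v_i$ in $N_B(w)$ already forces $\ell(w)\le i$, so maximizing $r(w)$ captures both endpoints of a witness in a single scalar. This allows the interval representation to be read off the $V$-convex structure directly, without any appeal to $W$-side orderings.
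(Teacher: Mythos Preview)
Your proof is correct, but the route you take in the $V$-convex case differs from the paper's.

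The paper handles that case structurally: starting from the $V$-convex graph $B=(V,W,E_B)$ (in the paper's notation $B=(X,Y,E_B)$ with convexity on $X$), it deletes every $w\in W$ whose neighborhood is properly contained in that of some other $w'\in W$, obtaining $B'=(V,W',E_{B'})$ with $B'^2[V]=B^2[V]$. It then argues that sorting $W'$ by the left endpoints $\text{left}(w)$ makes $B'$ also $W'$-convex, so $B'$ is biconvex and Lemma~\ref{lem:convex}(ii) (or Theorem~\ref{thm:biconvex}) applies. This yields the slightly stronger conclusion that $B^2[V]$ is even a \emph{unit} interval graph, and ties the result back to the biconvex characterization.

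Your argument instead builds an interval model directly via the ``rightmost reach'' $\beta_i=\max\{r(w):v_i\in N_B(w)\}$, and the verification that $[i,\beta_i]\cap[j,\beta_j]\neq\emptyset\Leftrightarrow v_iv_j\in E(G)$ is clean and self-contained. This is more elementary and avoids the detour through pruning dominated vertices and invoking the biconvex theorem; on the other hand, it does not immediately exhibit the unit interval structure that the paper's reduction to Theorem~\ref{thm:biconvex} reveals as a byproduct.
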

\begin{proof}
By Lemma~\ref{lem:convex} (i), interval graphs are half-squares of convex bipartite graphs. 
It remains to show that half-squares of convex bipartite graphs are interval graphs. Let $B=(X,Y,E_B)$ be an $X$-convex bipartite graph. By Lemma~\ref{lem:convex} (ii), $B^2[Y]$ is an interval graph. We now are going to show that $B^2[X]$ is an interval graph, too.

Let $B'=(X,Y',E_{B'})$ be obtained from $B$ by removing all vertices $y\in Y$ with $N_B(y)$ is properly contained in $N_B(y')$ for some $y'\in Y$. Clearly, $B^2[X]=B'^2[X]$. We show that $B'$ is $Y'$-convex, hence, by Lemma~\ref{lem:convex} (ii), $B^2[X]=B'^2[X]$ is an interval graph, as claimed. 
To this end, let $X=\{x_1,\ldots, x_n\}$ such that, for every $y\in Y'$, $N_{B'}(y)$ is an interval in $X$. (Recall that $B$, hence $B'$ is $X$-convex.) 
For $y\in Y'$ let $\text{left}(y)=\min\{i\mid x_i\in N_{B'}(y)\}$, and sort $y\in Y'$ increasing according $\text{left}(y)$. 
Then, for each $x\in X$, $N_{B'}(x)$ is an interval in $Y'$: Assume, by contradiction, that there is some $x\in X$ such that $N_{B'}(x)$ is not an interval in $Y'$. Let $y$ be a leftmost and $y'$ be a rightmost vertex in $N_{B'}(x)$. By the assumption, there is some $y''\in Y'\setminus N_{B'}(x)$ with $\text{left}(y) \leq \text{left}(y'') \leq  \text{left}(y')$. Then, as $N_{B'}(y), N_{B'}(y'')$ and $N_{B'}(y')$ are intervals in $X$, $N_{B'}(y'')$ must be a subset of $N_{B'}(y)$; see also Figure~\ref{fig:convex}. Since $x\in N_{B'}(y)$ but $x\not\in N_{B'}(y'')$, $N_{B'}(y'')$ must be a proper subset of $N_{B'}(y)$, contradicting to the fact that in $B'$, no such pair of vertices exists in $Y'$. Thus, $B'$ is $Y'$-convex. 

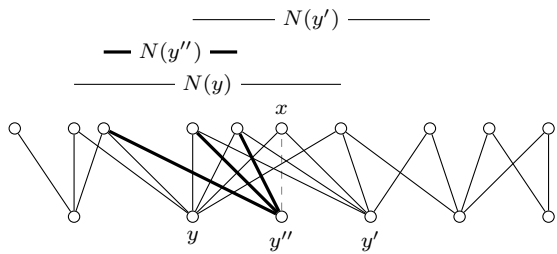
\begin{figure}[ht]
\begin{center}
%\begin{tikzpicture}[scale=.6]
\begin{tikzpicture}[scale=.39]
\tikzstyle{vertex}=[draw,circle,inner sep=1.5pt]; 

%\node[vertex] at (-1,3) {};
\node[vertex] (x0) at (-3,3) {};
\node[vertex] (x1) at (-1,3) {}; % [label=above:$s_1$] {}; 
\node[vertex] (x2) at (0,3) {}; %[label=below:$s_1'$] {}; 
\node[vertex] (x3) at (3,3) {}; %[label=above:$s_2$] {}; 
\node[vertex] (x4) at (4.5,3) {}; %[label=below:$s_2'$] {}; 
\node[vertex] (x) at  (6,3) [label=above:$x$] {}; 
\node[vertex] (x5) at (8,3) {}; %[label=below:$s_3'$] {};
\node[vertex] (x6) at (11,3) {};
\node[vertex] (x7) at (13,3) {};
\node[vertex] (x8) at (15,3) {};

\node[vertex]  (y0) at (-1,0) {};
\node[vertex]  (y) at (3,0) [label=below:$y$] {};
\node[vertex]  (y'') at (6,0)[label=below:$y''$] {};
\node[vertex]  (y') at (9,0) [label=below:$y'$] {};
\node[vertex]  (y1) at (12,0) {};
\node[vertex]  (y2) at (15,0) {};
%\draw (x1) -- (x2);
%\draw[very thick] (x2) -- (x3); \draw[very thick] (x3) -- (x4); 
%\draw (x4) -- (x); \draw (x) -- (x5); \draw (x5) -- (x6); 
\draw (y) -- (x1); \draw (y) -- (x); \draw (y) -- (x5);
\draw[dashed, %step=.15mm, 
      very thin, gray] (y'') -- (x);
\draw (y') -- (x3); \draw (y') -- (x); \draw (y') -- (x6);

\draw[very thick] (x2) -- (y'') -- (x4); \draw[very thick] (y'') -- (x3); 
\draw[thin] (y) -- (x2); \draw[thin] (y) -- (x3); \draw[thin] (y) -- (x4); 
\draw[thin] (y') -- (x4); \draw[thin] (y') -- (x5);  

\draw[thin] (y0) -- (x0); \draw[thin] (y0) -- (x1); \draw[thin] (y0) -- (x2);
\draw[thin] (x5) -- (y1) -- (x6); \draw[thin] (y1) -- (x7); \draw[thin] (y1) -- (x8); 
\draw[thin] (y2) -- (x7); \draw[thin] (y2) -- (x8);

\draw (-1,4.5) -- (8,4.5) node[pos=.5, fill=white] {$N(y)$};
\draw[very thick] (0,5.6) -- (4.5,5.6) node[pos=.5, fill=white] {$N(y'')$};
\draw (3,6.7) -- (11,6.7) node[pos=.5, fill=white] {$N(y')$};
\end{tikzpicture}
\end{center}
\caption{Assuming $\text{left}(y)\leq\text{left}(y'')\leq \text{left}(y')$, and $x$ is adjacent to $y$ and $y'$, but non-adjacent to $y''$.}\label{fig:convex}
\end{figure}

Note that $B'$ is indeed biconvex, hence, by Theorem~\ref{thm:biconvex}, $B^2[X]=B'^2[X]$ is even a unit interval graph.  
\qed
\end{proof}

Since (unit) interval graph with $n$ vertices and $m$ edges can be recognized in linear time $O(n+m)$ and all maximal cliques of an (unit) interval graph can be listed in the same time complexity (cf. \cite{Golumbic}), Theorems~\ref{thm:convex} and~\ref{thm:biconvex} imply:

\begin{corollary}
\textsc{Half-Square Of Convex} and \textsc{Half-Square Of Biconvex} can be solved in linear time. A (bi)convex bipartite half-root, if any, can be constructed in linear time. 
\end{corollary}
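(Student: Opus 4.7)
The plan is to turn Theorems~\ref{thm:convex} and~\ref{thm:biconvex} into an algorithm. Given the input graph $G=(V,E_G)$, I would first run a standard linear-time recognition algorithm for interval graphs (respectively unit interval graphs); see~\cite{Golumbic}. If the test fails, Theorem~\ref{thm:convex} (respectively Theorem~\ref{thm:biconvex}) immediately yields that $G$ has no convex (respectively biconvex) bipartite half-root, and the algorithm answers ``no''.

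If $G$ passes the test, the next step is to enumerate the maximal cliques of $G$ in a consecutive ordering $\mathcal{C}(G)=(Q_1,\ldots,Q_q)$, which can again be done in $O(n+m)$ time (\cite{Golumbic}), and to output the vertex-clique incidence bipartite graph $B=(V,\mathcal{C}(G),E_B)$ from Definition~\ref{def:incidence}. Correctness in the convex case follows directly from Lemma~\ref{lem:convex}(i): $B$ is $\mathcal{C}(G)$-convex and $G=B^2[V]$. For the biconvex case, the argument at the end of the proof of Theorem~\ref{thm:biconvex} shows that the very same $B$ is in addition $V$-convex once the vertices of $V$ are sorted lexicographically by $(\ell(v),r(v))$; since both coordinates lie in $\{1,\ldots,q\}\subseteq\{1,\ldots,n\}$, this sort can be performed in linear time by bucket sort.

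The only routine point left is that the output size, and hence the construction time, is $O(n+m)$. This reduces to the well-known bounds $q\leq n$ and $\sum_{i=1}^q |Q_i|=O(n+m)$ for chordal (and in particular interval) graphs, the latter being the standard size bound for the clique-tree representation of a chordal graph. I expect this size estimate to be the only step requiring more than a one-line justification; every other component of the proof is a direct translation of the structural results in Theorems~\ref{thm:convex} and~\ref{thm:biconvex} into algorithmic form.
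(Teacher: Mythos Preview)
Your proposal is correct and follows essentially the same approach as the paper: the paper's justification is the single sentence preceding the corollary, observing that (unit) interval graphs can be recognized and their maximal cliques listed in linear time, so Theorems~\ref{thm:convex} and~\ref{thm:biconvex} (together with the vertex--clique incidence construction used in their proofs) yield the result. Your write-up simply fills in the natural algorithmic details---the lexicographic bucket sort for the $V$-ordering in the biconvex case and the $\sum_i |Q_i|=O(n+m)$ output-size bound---that the paper leaves implicit.
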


%%%%%%%%%%%%%%%%%%%%%%%%%%%%%%%%%%%%%%%%%%%%%%%%%%%%%%%%%%%%%%%%%%%%%%%%

\section{Half-squares of chordal bipartite graphs}\label{sec:chordalbip}

%%%%%%%%%%%%%%%%%%%%%%%%%%%%%%%%%%%%%%%%%%%%%%%%%%%%%%%%%%%%%%%%%%%%%%%%
In this section, we show that half-squares of chordal bipartite graphs are precisely the strongly chordal graphs. 
Recall that a graph is chordal if it has no induced cycle of length at least four. It is well-known (see, e.g.,~\cite{Golumbic,McKeyM,Spinrad}) that a graph $G=(V,E_G)$ is chordal if and only if it admits a tree representation, that is, there exists a tree $T$ such that, for each vertex $v\in V$, $T_v$ is a subtree of $T$ and two vertices in $G$ are adjacent if and only if the corresponding subtrees in $T$ intersect. Moreover, the vertices of $T$ can be taken as the maximal cliques of the chordal graph (a clique tree). 
Recall also that a graph is strongly chordal if it is chordal and has no induced $k$-sun, $k\ge 3$. Here, a $k$-sun consists of a stable set $\{s_1,\ldots, s_k\}$ and a clique $\{t_1,\ldots,t_k\}$ and edges $s_it_i$, $s_it_{i+1}$, $1\le i\le k$. (Indices are taken modulo $k$.) 
 
We first begin with the following fact.

\begin{lemma}\label{lem:1}
Let $B=(V,W,E_B)$ be a bipartite graph without induced $C_6$ and let $k\ge 3$. If $B^2[V]$ contains an induced $k$-sun, then $B$ contains an induced cycle of length $2k$.
\end{lemma}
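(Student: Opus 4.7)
The plan is to lift each ``sun triangle'' $\{s_i,t_i,t_{i+1}\}$ to a single vertex of $W$ and then thread those vertices, together with the clique $\{t_1,\ldots,t_k\}$, into an induced $2k$-cycle of $B$. Fix an induced $k$-sun in $B^2[V]$ on the stable set $\{s_1,\ldots,s_k\}$ and the clique $\{t_1,\ldots,t_k\}$, with edges $s_it_i$ and $s_it_{i+1}$ (indices mod $k$).

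The heart of the argument, and the only step that invokes the $C_6$-free hypothesis, is the local existence claim: for every $i$ there is a vertex $w_i\in W$ that is adjacent in $B$ to \emph{all three} of $s_i$, $t_i$, $t_{i+1}$ simultaneously. I would prove this by contradiction. Pick arbitrary witnesses $\alpha,\beta,\gamma\in W$ of the three edges $s_it_i$, $s_it_{i+1}$, $t_it_{i+1}$ of $B^2[V]$. The absence of a common neighbor forces $\alpha\not\sim t_{i+1}$, $\beta\not\sim t_i$, $\gamma\not\sim s_i$, and it makes $\alpha,\beta,\gamma$ pairwise distinct (any coincidence would itself be the forbidden common neighbor). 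Then
\[
s_i-\alpha-t_i-\gamma-t_{i+1}-\beta-s_i
\]
is a $6$-cycle in $B$; the only possible chords are edges between $V$ and $W$ (since $B$ is bipartite), and each of the three candidates $s_i\gamma$, $t_i\beta$, $t_{i+1}\alpha$ would again produce a common neighbor of $\{s_i,t_i,t_{i+1}\}$. Hence these six vertices induce a $C_6$ in $B$, contradicting the hypothesis. This is the main (and only real) obstacle.

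Once the $w_i$'s are in hand, the rest is routine bookkeeping with the induced-sun structure. From $s_i\sim w_i$, together with the non-edges $s_is_j\notin E(B^2[V])$ for $j\neq i$ and $s_it_j\notin E(B^2[V])$ for $j\notin\{i,i+1\}$, I would read off that $w_i$ has no other neighbor among $\{s_1,\ldots,s_k\}$ and no neighbor among $\{t_1,\ldots,t_k\}$ other than $t_i$ and $t_{i+1}$. In particular $w_i\neq w_j$ whenever $i\neq j$, else $s_is_j$ would have $w_i=w_j$ as a common neighbor.

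Finally, I would exhibit
\[
C:\ t_1-w_1-t_2-w_2-\cdots-t_k-w_k-t_1
\]
as the desired induced $C_{2k}$ in $B$. The $2k$ listed vertices are distinct and all $2k$ cycle edges are present by construction; there are no $V$-$V$ or $W$-$W$ chords because $B$ is bipartite, and by the adjacency restriction just derived, $w_j$ is adjacent to exactly $t_j$ and $t_{j+1}$ among the $t_\ell$'s, so no chord $t_iw_j$ with $i\notin\{j,j+1\}$ exists. Thus $C$ is induced of length $2k$, finishing the proof.
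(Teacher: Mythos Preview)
Your proof is correct and follows the same overall plan as the paper: find for each $i$ a vertex $w_i\in W$ covering the triangle $\{s_i,t_i,t_{i+1}\}$, then verify that $t_1,w_1,t_2,w_2,\ldots,t_k,w_k$ induce a $C_{2k}$.

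The one notable difference is in how the existence of the $w_i$ is obtained. The paper first proves the stronger auxiliary claim that \emph{every} clique of $B^2[V]$ is contained in $N_B(w)$ for some $w\in W$, via a maximality argument (choose $w_1$ with $Q\cap N(w_1)$ maximal, then $w_2$, then $w_3$, forcing an induced $C_6$). You instead argue directly for the triangle case: pick witnesses $\alpha,\beta,\gamma$ for the three edges and observe that failure of a common neighbor makes $s_i,\alpha,t_i,\gamma,t_{i+1},\beta$ an induced $C_6$. Your version is shorter and entirely sufficient for the lemma as stated; the paper's version yields a reusable structural fact (the ``clique-from-star'' property of $C_6$-free bipartite half-roots) but does not actually need that extra strength here. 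The remaining verification that the $2k$-cycle is induced is identical in both proofs.
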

%Geht auch via other forbidden induced subgraphs? cf. squares of 3-sun-free split graphs!!! 
\begin{proof} 
We first show that 
\[\text{every clique $Q$ in $B^2[V]$ stems from a star in $B$.}\]
 Suppose, by a contradiction, that there is some clique $Q$ in $B^2[V]$ such that, for any vertex $w\in W$, $Q\setminus N(w)\not=\emptyset$. Choose a vertex $w_1\in W$ with $Q':=Q\cap N(w_1)$ is maximal. Let $v_1\in Q\setminus N(w_1)$. Since $Q$ is a clique in $B^2[V]$, there is a vertex $w_2\in W$ adjacent to $v_1$ and some vertices in $Q'$. Choose such a vertex $w_2$ with $Q'\cap N(w_2)$ is maximal. 
By the choice of $w_1$, there is a vertex $v_2\in Q'\setminus N(w_2)$. Again, since $Q$ is a clique in $B^2[V]$, there is a vertex $w_3\in W$ adjacent to both $v_1$ and $v_2$. By the choice of $w_2$, there is a vertex $v_3\in Q'\cap N(w_2)$ non-adjacent to $w_3$. 
But then $w_1, v_2, w_3, v_1, w_2$ and $v_3$ induce a $C_6$ in $B$, a contradiction. Thus, there must be a vertex $w\in W$ such that $Q\subseteq N(w)$.
 
 Now, consider a $k$-sun in $B^2[V]$ with stable set $\{s_1,\ldots, s_k\}$, clique $\{t_1,\ldots,t_k\}$, and edges $s_it_i$, $s_it_{i+1}$, $1\le i\le k$. Let $w\in W$ such that $\{t_1,\ldots,t_k\}\subseteq N(w)$, and let $w_i\in W$ such that $\{s_i, t_i, t_{i+1}\}\subseteq N(w_i)$, $1\le i\le k$. Since $s_i$ is adjacent, in the $k$-sun, only to $t_i$ and $t_{i+1}$, we have in $B$, that $w_i$ is non-adjacent to $\{t_1,\ldots,t_k\}\setminus\{t_i,t_{i+1}\}$. That is, $t_1, w_1, t_2, w_2,\ldots, t_k, w_k$ induce a $C_{2k}$ in $B$.   
\qed
\end{proof}

\begin{theorem}\label{thm:chordalbip}
A graph is half-square of a chordal bipartite graph if and only if it is a strongly chordal graph.
\end{theorem}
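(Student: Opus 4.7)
The proof splits into two directions.

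\emph{Necessity.} Assume $G = B^2[V]$ for some chordal bipartite $B = (V, W, E_B)$. The $k$-sun freeness of $G$ for $k \geq 3$ is immediate from Lemma~\ref{lem:1}: a $k$-sun in $G$ would produce an induced $C_{2k}$ in $B$ with $2k \geq 6$, contradicting chordal bipartiteness. For chordality, I would lift an induced cycle $v_1 \cdots v_k v_1$ of length $k \geq 4$ in $G$ to an induced cycle $v_1 w_1 v_2 w_2 \cdots v_k w_k v_1$ of length $2k \geq 8$ in $B$, where $w_i$ is a common $B$-neighbor of $v_i$ and $v_{i+1}$. The routine verification is that the $w_i$ are pairwise distinct and adjacent in $B$ only to $v_i$ and $v_{i+1}$ among the cycle vertices---any violation would immediately yield a chord in the $G$-cycle, contradicting the choice of the cycle as induced.

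\emph{Sufficiency.} Assume $G$ is strongly chordal. I would take $B$ to be the vertex-clique incidence bipartite graph $(V, {\cal C}(G), E_B)$ from Definition~\ref{def:incidence}, so that $G = B^2[V]$ by Proposition~\ref{prop:subdivision-incidence}. It then remains to show $B$ is chordal bipartite. This is a classical consequence of Farber's theorem, characterising strongly chordal graphs as exactly those whose clique matrix is totally balanced, combined with the fact that a $0$--$1$ matrix is totally balanced iff its associated bipartite graph has no induced cycle of length at least six (the latter via the standard identification of induced $C_{2k}$ with a $C_k$-incidence submatrix). A direct derivation would assume an induced cycle $v_1 Q_1 v_2 Q_2 \cdots v_k Q_k$ of length $2k \geq 6$ in $B$ and attempt to extract a $k$-sun in $G$: the pairs $\{v_i, v_{i+1}\} \subseteq Q_i$ produce the inner clique edges, while for each $i$ one picks a satellite $s_i \in Q_i$ non-adjacent to some $v_j \notin Q_i$ (which exists by the maximality of $Q_i$).

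\emph{Main obstacle.} The delicate point is the sufficiency direction: one must ensure that $\{v_1, \ldots, v_k\}$ induces a clique and that $\{s_1, \ldots, s_k\}$ induces a stable set, i.e., that the configuration is genuinely a $k$-sun in $G$. A priori these sets can be polluted by unwanted edges (and the $v_i$ only automatically satisfy $v_i v_{i+1} \in E_G$ from sharing $Q_i$), and strong chordality must be used beyond mere chordality to rule such edges out. This is exactly where one needs the heavier machinery---either a simple elimination ordering or the $\Gamma$-freeness of the clique matrix under a doubly lexical ordering (Lubiw, Anstee--Farber, Hoffman--Kolen--Sakarovitch)---and this is why invoking the totally-balanced-matrix characterization is the cleanest route. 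The necessity direction, by contrast, is purely combinatorial and follows at once from Lemma~\ref{lem:1} together with the cycle-lifting argument sketched above.
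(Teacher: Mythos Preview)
Your proposal is correct, and the sufficiency direction is identical to the paper's: both take the vertex-clique incidence bipartite graph and invoke Farber's result that this graph is chordal bipartite precisely when $G$ is strongly chordal.

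The necessity direction differs in how chordality of $B^2[V]$ is obtained. You lift an induced $C_k$ ($k\ge 4$) in $G$ to an induced $C_{2k}$ in $B$ and contradict chordal bipartiteness directly; the verification that the chosen $w_i$ are distinct and have no extra adjacencies among the $v_j$ is indeed routine, since any violation forces an edge $v_av_b$ with $|a-b|\ge 2$ in $G$. The paper instead uses a structural detour: chordal bipartite graphs are tree convex (citing \cite{JiangLWX,Lehel}), so each $N_B(y)$ is a subtree of a tree on $X$, and the family $\{N_B(y):y\in Y\}$ is therefore a subtree representation of $B^2[Y]$, forcing it to be chordal. Both arguments then finish sun-freeness via Lemma~\ref{lem:1} in the same way. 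Your route is more self-contained and avoids the external tree-convexity fact; the paper's route is shorter once that fact is granted and makes the connection to intersection representations explicit, which fits the overall theme of the surrounding sections.
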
  
\begin{proof}
We first show that half-squares of chordal bipartite graphs are chordal. Let $B=(X,Y,E_B)$ be a chordal bipartite graph. It is known that $B$ is tree convex (\cite{JiangLWX,Lehel}). Thus, there is a tree $T=(X,E_T)$ such that, for each $y\in Y$, $N(y)$ induces a subtree in $T$. Then, for distinct vertices $y, y'\in Y$, $y$ and $y'$ are adjacent in $B^2[Y]$ if and only if $N(y)\cap N(y')\not=\emptyset$, and thus, $B^2[Y]$ has a tree representation, hence chordal. Now, by Lemma~\ref{lem:1}, $B^2[Y]$ cannot contain any sun $k$-sun, $k\ge 3$, showing that it is a strongly chordal graph. By symmetry, $B^2[X]$ is also strongly chordal. We have seen that half-squares of chordal bipartite graphs are strongly chordal graphs.

Next, let $G=(V,E_G)$ be a strongly chordal graph, and let $B=(V,{\cal C}(G), E_B)$ be the vertex-clique incidence bipartite graph of $G$. By Proposition~\ref{prop:subdivision-incidence}, $G=B^2[V]$. Moreover, it is well-known (\cite{Farber}) that $B$ is chordal bipartite. Thus, every strongly chordal graph is a half-square of some chordal bipartite graph, namely of its vertex-clique incidence bipartite graph.
\qed
\end{proof}

Testing if $G$ is strongly chordal can be done in $O(\min\{n^2,m\log n\})$ time 
(\cite{Farber,Lubiw1987,Spinrad}). Assuming $G$ is strongly chordal, 
all maximal cliques $Q_1$, \ldots, $Q_q$ of $G$ can be listed in linear time 
(cf. \cite{Golumbic,Spinrad}); note that $q\le n$. So, Theorem~\ref{thm:chordalbip} implies:

\begin{corollary}\label{cor:chordalbip}
\textsc{Half-Square Of Chordal Bipartite} can be solved in time $O(\min\{n^2,m\log n\})$, where $n$ and $m$ are the vertex and edge number of the input graph, respectively. A chordal bipartite half-root, if any, can be constructed in the same time complexity. 
\end{corollary}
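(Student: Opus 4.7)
The plan is to derive Corollary~\ref{cor:chordalbip} directly from Theorem~\ref{thm:chordalbip} together with known algorithmic results for strongly chordal graphs. By Theorem~\ref{thm:chordalbip}, the input graph $G$ is a half-square of some chordal bipartite graph if and only if $G$ is strongly chordal, so the recognition problem reduces verbatim to testing strong chordality. I would first invoke the cited algorithms of Farber, Lubiw, and Spinrad to test strong chordality of $G$ in $O(\min\{n^2, m\log n\})$ time, returning ``no'' whenever the test fails.

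If the test succeeds, I would construct the candidate half-root explicitly as the vertex-clique incidence bipartite graph $B=(V,{\cal C}(G),E_B)$ from Definition~\ref{def:incidence}. By Proposition~\ref{prop:subdivision-incidence} we have $G=B^2[V]$, and by Farber's result (cited in the proof of Theorem~\ref{thm:chordalbip}) the graph $B$ is chordal bipartite, so $B$ is a valid chordal bipartite half-root. Thus no separate verification step is needed beyond the strong chordality test itself.

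The only remaining point to check is that constructing $B$ fits within the claimed time bound. Here I would use the fact that a chordal (hence strongly chordal) graph on $n$ vertices has at most $n$ maximal cliques, and that all of them together with their vertex lists can be enumerated in $O(n+m)$ time (see Golumbic, Spinrad). The incidence graph $B$ has $n+q\le 2n$ vertices and a number of edges equal to $\sum_i |Q_i|$, which for a chordal graph is $O(n+m)$; hence $B$ can be written down in $O(n+m)$ time. The total running time is therefore dominated by the strong chordality test, giving the bound $O(\min\{n^2,m\log n\})$ for both recognition and construction.

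There is no substantial obstacle beyond bookkeeping: the structural heavy lifting is already contained in Theorem~\ref{thm:chordalbip}, so the corollary is essentially a citation of known complexities. The mildest care needed is to observe that the enumeration of maximal cliques of a chordal graph fits within $O(n+m)$ and that the total incidence count is also $O(n+m)$, so that building $B$ does not dominate the strong chordality test.
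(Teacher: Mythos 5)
Your proposal is correct and follows essentially the same route as the paper: reduce to strong chordality testing via Theorem~\ref{thm:chordalbip}, use the cited $O(\min\{n^2,m\log n\})$-time recognition algorithms, and build the vertex-clique incidence bipartite graph (chordal bipartite by Farber's result) from the linear-time enumeration of the at most $n$ maximal cliques. The only difference is that you spell out the bookkeeping (total incidence size $O(n+m)$) that the paper leaves implicit.
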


In the rest of this section, we give another proof for a characterization of half-squares of trees found in~\cite{MnichRS}. Note that half-squares of trees generalize trees: Let $G=(V,E_G)$ be a tree. Then the subdivision $T=(V,E_G,E_T)$ of $G$ is also a tree and $G=T^2[V]$ (cf. Proposition~\ref{prop:subdivision-incidence}). Thus, trees are half-squares of some trees. But half-squares of trees are more general.

A {\em block graph\/} is one in which every maximal $2$-connected subgraph (a block) is a complete graph; equivalently, a block graph is a chordal graph without induced $K_4-e$, a $K_4$ minus an edge. In particular, trees are block graphs.

\begin{theorem}[\cite{MnichRS}]\label{thm:tree}
A graph is half-square of a tree if and only if it is a block graph.
\end{theorem}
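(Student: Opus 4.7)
The plan is to prove both directions using the machinery already developed earlier in the paper. For the forward direction, assume $G = T^2[V]$ where $T = (V, W, E_T)$ is a tree. Since $T$ has no induced cycles whatsoever, it is trivially chordal bipartite, so Theorem~\ref{thm:chordalbip} gives that $G$ is strongly chordal; in particular $G$ is chordal. To conclude that $G$ is a block graph I would rule out an induced $K_4 - e$. Suppose toward contradiction that $\{a,b,c,d\}$ induces a diamond in $G$ with non-edge $cd$. A small subclaim is that, for any triangle in $T^2[V]$, the three pairwise common $W$-neighbors must coincide: otherwise the six (or four) vertices arising would form a $C_6$ (or $C_4$) in $T$, impossible. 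Applying this to $\{a,b,c\}$ and $\{a,b,d\}$ yields common neighbors $w_1$ and $w_2$. If $w_1 = w_2$, then $c$ and $d$ share this $W$-neighbor and $cd$ is an edge of $G$, a contradiction. If $w_1 \ne w_2$, then $a$ and $b$ are both adjacent to $w_1$ and $w_2$ in $T$, producing a $C_4$ in $T$.

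For the backward direction, let $G$ be a connected block graph and take $B = (V, \mathcal{C}(G), E_B)$ to be its vertex-clique incidence bipartite graph. By Proposition~\ref{prop:subdivision-incidence} we have $G = B^2[V]$, so it suffices to verify that $B$ is a tree. I would use the defining property of block graphs: any two distinct maximal cliques share at most one vertex. Connectivity of $B$ is inherited from that of $G$. For acyclicity, suppose $B$ contains an induced cycle $v_1 Q_1 v_2 Q_2 \cdots v_k Q_k v_1$ with all $v_i$ and $Q_i$ pairwise distinct. If $k = 2$, then $v_1, v_2 \in Q_1 \cap Q_2$, contradicting the intersection bound. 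If $k = 3$, then $\{v_1,v_2,v_3\}$ is a triangle of $G$ and thus lies in a common maximal clique $Q$; but then $Q$ meets each $Q_i$ in at least two vertices, forcing $Q = Q_1 = Q_2 = Q_3$, contradicting distinctness. For $k \ge 4$ I would use chordality of $G$: the cycle $v_1 v_2 \cdots v_k$ in $G$ has a chord $v_i v_j$ lying in some maximal clique $Q$, and splicing $Q$ into the $B$-cycle yields a strictly shorter cycle, reducing to the cases $k \in \{2,3\}$.

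The main technical hurdle is the backward direction, specifically the acyclicity argument, which must carefully combine the blockness property (pairwise clique intersections of size at most one) with chordality. A cleaner alternative would be a counting argument via the block-cut tree: non-cut vertices of $G$ each belong to a unique maximal clique, cut vertices to several, and summing degrees in $B$ through the block-cut tree structure gives exactly $n + q - 1$ edges on $n + q$ vertices, so connectivity of $B$ forces it to be a tree. A small subtlety is that a connected bipartite tree always produces a connected half-square, so the theorem is implicitly understood for connected block graphs (or else one reads "tree" as "forest" in the disconnected case).
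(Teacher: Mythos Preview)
Your proof is correct and follows the same approach as the paper's: both use Theorem~\ref{thm:chordalbip} plus a short $C_4$-in-$T$ argument to exclude $K_4-e$ for the forward direction, and the vertex-clique incidence bipartite graph for the converse. The only difference is that the paper dispatches the backward direction in one line---since in a block graph the maximal cliques are exactly the blocks, the vertex-clique incidence graph \emph{is} the block tree and hence a tree---which is precisely the ``cleaner alternative'' you sketch at the end.
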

\begin{proof}
Let $T=(V,W,E_T)$ be a tree. By symmetry, we only consider the half square $T^2[V]$. By Theorem~\ref{thm:chordalbip}, $T^2[V]$ is chordal. Since $T$ has no cycle, $T^2[V]$ cannot contain $K_4-e$ as an induced subgraph: for, if $\{a,b,c\}$ and $\{b,c,d\}$ are the two triangles of an induced $K_4-e$ in $T^2[V]$, then $\{a,b,c\}\subseteq N_T(w_1)$ and $\{b,c,d\}\subseteq N_T(w_2)$ for some $w_1, w_2\in W$, and $b, c, w_1, w_2$ would induce a $4$-cycle in $T$.  Thus, $T^2[V]$ is a block graph.

Conversely, let $G=(V,E_G)$ be a (connected) block graph. Then, as the maximal cliques in $G$ are exactly the blocks of $G$, the vertex-clique incidence bipartite graph $B(V,{\cal C}(G),E_B)$ of $G$ is a tree. Thus, block graphs are half-squares of trees.  
\qed
\end{proof}

%%%%%%%%%%%%%%%%%%%%%%%%%%%%%%%%%%%%

\section{Conclusions}\label{sec:con}

%%%%%%%%%%%%%%%%%%%%%%%%%%%%%%%%%%%%%
Until recently, only half-squares of planar bipartite graphs (the map graphs) have been investigated, and the most considered problem is if it is possible to recognize these graphs faster and simpler than Thorup's $O(n^{120})$-time algorithm.  

In this paper, we initiate an investigation of half-squares of not necessarily planar bipartite graphs. We have shown the first NP-hardness result, namely that recognizing if a graph is half-square of a balanced bisplit graph is NP-complete. For classical subclasses of tree convex bipartite graphs such as star convex and star biconvex, convex and biconvex, and chordal bipartite graphs, we have given good structure characterizations for their half-squares. These structural results imply that half-squares of these restricted classes of tree convex bipartite graphs can be recognized efficiently. 

Recall that chordal bipartite graphs form a subclass of tree biconvex bipartite graphs (\cite{JiangLWX,Lehel}), and that half-squares of chordal bipartite graphs can be recognized in polynomial time (Corollary~\ref{cor:chordalbip}), while the complexity of recognizing half-squares of tree (bi)convex bipartite graphs is unknown. So, an obvious question is: what is the computational complexity of \textsc{Half-Square Of Tree (Bi)convex}?

\medskip\noindent
\textbf{Acknowledgment}.\, 
We thank Hannes Steffenhagen for his careful reading and very helpful remarks. We also thank one of the unknown referees for his/her helpful comments and suggestions, and for pointing out a gap in an earlier proof of Lemma~\ref{lem:1}.

\end{document}